\documentclass[journal]{IEEEtran}
\hyphenation{op-tical net-works semi-conduc-tor}
\usepackage{subfigure}
\usepackage{amsmath,color}
\usepackage{graphicx,amssymb,lineno,bm,subfigure}
\usepackage{algorithm}
\usepackage{algorithmic}
\usepackage{cite}

\newtheorem{coro}{\textbf{Corollary}}
\newtheorem{rmk}{\textbf{Remark}}

\newtheorem{prop}{\textbf{Proposition}}
\newtheorem{proof}{\textbf{Proof}}
\newtheorem{lma}{\textbf{Lemma}}

\begin{document}

%
\title{\huge{Grid Energy Consumption and QoS Tradeoff in Hybrid Energy Supply Wireless Networks}}
\author{\normalsize \authorblockN{Yuyi~Mao,~\IEEEmembership{Student Member,~IEEE},
Jun~Zhang,~\IEEEmembership{Senior Member,~IEEE}, and Khaled~B. Letaief,~\IEEEmembership{Fellow,~IEEE}}

\thanks{Manuscript received April 1, 2015; revised August 4, 2015 and
November 11, 2015; accepted January 16, 2016. The associate editor
coordinating the review of this paper and approving it for
publication was Dr. Federico Boccardi. This work was supported by the Hong Kong
Research Grants Council under Grant No. 610212.
}
\thanks{The authors are with the Department of Electronic and Computer Engineering, the Hong Kong University
of Science and Technology, Clear Water Bay, Kowloon, Hong Kong (e-mail: \{ymaoac, eejzhang, eekhaled@ust.hk\}). Khaled B. Letaief is also with Hamad bin Khalifa University, Doha, Qatar (e-mail: kletaief@hkbu.edu.qa).}

\thanks{This work was presented in part at IEEE Wireless Communications and Networking Conference (WCNC), New Orleans, LA, Mar. 2015.}
}



\maketitle

\begin{abstract}
Hybrid energy supply (HES) wireless networks have recently emerged as a new paradigm to enable green networks, which are powered by both the electric grid and harvested renewable energy. In this paper, we will investigate two critical but conflicting design objectives of HES networks, i.e., the grid energy consumption and quality of service (QoS). Minimizing grid energy consumption by utilizing the harvested energy will make the network environmentally friendly, but the achievable QoS may be degraded due to the intermittent nature of energy harvesting. To investigate the tradeoff between these two aspects, we introduce the total service cost as the performance metric, which is the weighted sum of the grid energy cost and the QoS degradation cost. Base station assignment and power control is adopted as the main strategy to minimize the total service cost, while both cases with non-causal and causal side information are considered. With non-causal side information, a Greedy Assignment algorithm with low complexity and near-optimal performance is proposed. With causal side information, the design problem is formulated as a discrete Markov decision problem. Interesting solution structures are derived, which shall help to develop an efficient monotone backward induction algorithm. To further reduce complexity, a Look-Ahead policy and a Threshold-based Heuristic policy are also proposed. Simulation results shall validate the effectiveness of the proposed algorithms and demonstrate the unique grid energy consumption and QoS tradeoff in HES networks.
\end{abstract}

\begin{keywords}
Green communications, energy harvesting (EH), hybrid energy supply (HES), power control, base station assignment, Markov decision process (MDP), QoS.
\end{keywords}
%
\IEEEpeerreviewmaketitle

\section{Introduction}
\subsection{Related Works and Motivations}

\IEEEPARstart{W}ITH the explosion of mobile data traffic in wireless ecosystems, the energy consumption of the Information and Communication Technology (ICT) industry is becoming an economic and ecological issue. It is estimated that the annual carbon emission and electric grid power consumption of the ICT industry will reach up to 235 Mto \cite{Fehske11} and 414 TWh \cite{Lambert12}, respectively, in 2020. Thus, it is crucial to seek alternative environmentally friendly energy sources for wireless networks. Energy harvesting (EH) is a promising candidate, which can capture ambient recyclable energy, including solar radiation, radio signal, wind energy, etc \cite{Sudevalayam11}, and thus it is becoming an indispensable solution for green communication systems \cite{Ulukus15}. By introducing EH techniques to the next-generation communication systems, it is estimated that a 20\% $CO_{2}$ emission reduction can be achieved \cite{Piro13}.

Due to the intermittent nature of the EH processes, conventional communication protocols cannot take full advantages of the harvested energy. Consequently, transmission policies for communication systems with EH capabilities have been widely investigated. The optimal transmission policy for point-to-point EH systems with off-line side information (SI), including channel SI (CSI) and energy SI (ESI), was studied in \cite{Ozel11}, which was shown to have a \emph{directional water-filling} structure. The investigation has been later extended to scenarios with more practical assumptions on SI \cite{ZWang14,Blasco13,YLuo1212}, and to other systems, including broadcast channels \cite{JYang12} and multiple access channels  \cite{JYang122}, as well as cooperative communication systems \cite{CHuang13,YLuo13,YLuo1312,YLuo1601,YMao14}. Moreover, joint energy-bandwidth allocation policies for multi-user EH systems have been recently proposed in \cite{ZWangPartI,ZWangPartII}.

Because of the innate characteristic of EH systems, it is difficult to maintain stable communication performance. To get full benefits of the renewable energy sources while guaranteeing the quality of service (QoS) requirement, the electric grid can be retained as a backup energy source. As a result, recently, hybrid energy supply (HES) communication systems have attracted significant attentions, where EH and electric grid coexist. The initial investigation started from point-to-point HES systems. Recursive expressions for the power grid energy consumption in HES systems were derived
in \cite{YMao1312}. Given the grid energy budget, a \emph{two-stage water-filling} algorithm was proposed for throughput maximization in \cite{JGong13}, where the CSI and ESI was assumed to be non-causally known at the transmitter. In \cite{XKang1408}, power allocation strategies for weighted energy cost minimization were proposed. For HES \emph{orthogonal frequency division multiple access} (OFDMA) systems, a power and sub-carrier allocation  policy to maximize energy-efficiency was developed in \cite{DNg13}. More recently, there have been substantial efforts on HES wireless networks. Targeting on saving the grid energy consumption, the green energy utilization optimization problem was solved in \cite{THan1302}. In \cite{JGong14}, joint optimization of sleep control and resource allocation for HES networks was investigated. For HES heterogeneous networks, a green energy and latency aware user association scheme was proposed in \cite{THan13} to minimize the grid energy consumption while achieving the required QoS. Besides, in \emph{coordinated multi-point} (CoMP) HES systems, joint energy cooperation and communication cooperation was proposed in \cite{JXu14}. In addition, a low-complexity online base station assignment and power control algorithm based on Lyapunov optimization was developed in \cite{YMao1512}.

Most of the existing works on HES systems assume that both the harvested energy and grid energy are available at the same transmitter. However, considering practical implementation issues, e.g. the high deployment cost of the power cable and the increasing number of base stations (BSs), it is difficult to connect all the BSs to the electric grid \cite{YMao1501}. Consequently, in future HES networks, there will be a portion of BSs powered purely by the EH devices, i.e., without electric cable connection \cite{YMao15}.  Motivated by the greenness brought by EH techniques, together with the constraints on the power cable deployment, in this paper, we will investigate how to coordinate a grid-powered BS (GP-BS) and an EH-BS to serve a mobile user. The number of transmitted data packets is used as a measure of the achievable QoS. In such networks, minimizing the grid energy consumption at the GP-BS and maximizing the QoS are two important but conflicting targets. That is, in order to achieve a better QoS, more grid energy will be consumed, and vice versa. Consequently, to balance the grid energy consumption and the degradation of QoS, i.e., the number of dropped data packets, it is of fundamental importance to investigate their tradeoff.

Different from previous studies with both grid energy and harvested energy at the same BS, new design challenges arise in the HES network considered in this paper. In particular, both BS assignment, i.e., whether the GP-BS or the EH-BS should be selected to serve the mobile user, and BS power control, i.e., how to determine the transmit power of the picked BS in each block, need to be optimized. Compared to conventional wireless networks, where users are normally assigned to their nearest BSs \cite{CLi14}, the major design challenge in HES wireless networks comes from the sporadic energy arrival at the EH-BS. Thus, the BS assignment should depend on both the channel condition and the EH profile. In addition, since the EH-BS does not have a grid energy supply, efficient utilization of the harvested energy should adapt to conditions of multiple channels from both the EH-BS and GP-BS, which complicates the design compared to systems with co-located grid energy and harvested energy.

\subsection{Contributions}
In this paper, we will study the grid energy consumption and QoS tradeoff in HES wireless networks. Our major contributions are summarized as follows:
\begin{enumerate}
\item
{We consider an HES wireless network with a GP-BS and an EH-BS, which is motivated by practical implementation considerations. Such a network model has not been considered in previous works on HES systems, which only assumed a single BS with co-located grid energy and harvested energy \cite{JGong13,XKang1408,DNg13,THan1302,JGong14,THan13,JXu14}.}

\item
{Due to the difference in the network model, the basic design principle is different from previous studies. In particular, we propose a novel performance metric, namely, the \emph{total service cost}, to optimize the HES network. This new metric helps to reveal a new design aspect of the HES network, i.e., the grid energy consumption and QoS tradeoff. This is innovative compared to existing works on HES systems, which aimed at minimizing the grid energy consumption under given QoS requirements.}

\item
{We investigate the total service cost minimization (TSCM) problem assuming either non-causal or causal side information, which provides a viable approach to balance the grid energy consumption and the achievable QoS, i.e., the number of dropped packets. Specifically:}
\begin{itemize}
\item
{With non-causal CSI and ESI, i.e., the off-line setting, the TSCM problem turns out to be a highly difficult mixed integer nonlinear programming problem. By exploiting the problem structure, we transform the TSCM problem into an equivalent zero-one integer programming problem, which, however, is still NP-hard. Thus, a low-complexity Greedy Assignment algorithm is proposed, which is suboptimal in general but achieves a close-to-optimal performance. For two special cases, we show that the greedy algorithm is actually optimal. This not only offers key intuitions on developing the online algorithms with causal SI, but also facilitates the evaluation of the online algorithms without the need of calling for high-complexity solvers.}

\item
{With causal CSI and ESI, i.e., the online setting, we formulate the TSCM problem into a discrete Markov decision process (MDP) problem. Interesting monotone structures of the optimal policy are derived, which coincide with the intuitions obtained from the Greedy Assignment algorithm and help to develop an efficient monotone backward induction algorithm. In order to further reduce complexity, a Look-Ahead policy and a Threshold-based Heuristic policy are also proposed.}
\end{itemize}

\item
Extensive simulation results are provided to validate the effectiveness of the proposed policies. In particular, we will show that the proposed Greedy Assignment algorithm achieves a near-optimal performance for the off-line case, while the Threshold-based Heuristic policy can approach the discrete MDP solution for the online case. Moreover, the fundamental grid energy consumption and QoS tradeoff in such networks is illustrated. Based on the tradeoff curves, for a given QoS requirement, the operating point to minimize the grid energy consumption can be determined via adjusting the weighting factor. Our results reveal that intelligent energy management schemes that fully exploit the available SI play an essential role in HES wireless networks.
\end{enumerate}

\subsection{Organization}
The rest of this paper is organized as follows. We introduce the system model in Section II. In Section III,  the TSCM problem with non-causal SI is formulated, for which an efficient algorithm is developed. In Section IV, the case with causal SI is studied. Simulation results are presented in Section V and we will conclude this paper in Section VI.

\section{System Model}

\subsection{Network Model and Energy Model}
\begin{figure}[h]
\begin{center}
    \label{System Model Fig}
   \includegraphics[width=0.48\textwidth]{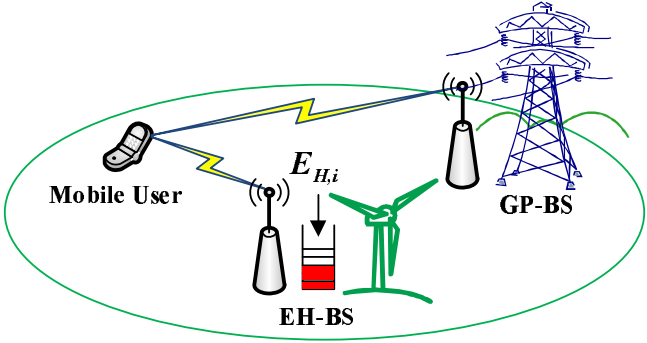}
\end{center}
\caption{An HES network with an EH-BS, a GP-BS and a mobile user.}
\label{systemmodel}
\end{figure}
As shown in Fig. \ref{systemmodel}, we consider an HES wireless network with an EH-BS, a GP-BS and a mobile user. The GP-BS, which can be a macro-BS to maintain network coverage, is powered by the electric grid and its maximum transmit power is denoted as $p_{G}^{\max}$. The EH-BS is equipped with an EH component and powered purely by the harvested energy, with the maximum transmit power as $p_{H}^{\max}$. This EH-BS may serve as a small cell to increase system capacity \cite{YMao15}. These two BSs will coordinate to serve the considered user. Such deployment of HES networks is practically attractive for occasions where the electric grid is not easily accessible. We consider an $N$-block downlink transmission frame with block length $\tau$, where the user can be served by at most one BS in each block\footnote{In this paper, we will adopt BS assignment to optimize the network, i.e., the user is served by one of the BSs at each time, which is a widely-used transmission scheme for cellular networks \cite{CLi14,Jeff11}. In principle, joint transmission schemes, i.e., the BSs transmit simultaneously, can be used for HES networks \cite{JZhang0904}. However, it will cause significant feedback overhead and have more stringent requirement on synchronization \cite{Biermann13,QCui14}.}. The block index set is denoted as $\mathcal{N}=\{1,2,...,N\}$.

We assume a packet with $R$ bits arriving at the beginning of each block, and it may be transmitted via the EH-BS or the GP-BS in the following block. It may also happen that neither of the BSs will serve the user. The number of transmitted packets is used to measure the achievable QoS \cite{AGiovanidis09}, while BS assignment and power control is adopted to optimize the system. The GP-BS will collect all the required side information, and make the transmission decision. In this paper, the delay incurred by signalling and decision making is neglected for simplicity. Denote $I_{j,i}\in \{0,1\}$, with $j\in\{G, H, D\}$, as the BS assignment indicators in the $i$-th block, where $I_{G,i}=1$ ($I_{H,i}=1$) indicates that the GP-BS (EH-BS) is assigned to deliver the packet, and $I_{D,i}=1$ represents the event that neither of the BSs is selected, i.e., the packet is dropped. These indicators are subject to the following operation constraint:
\begin{eqnarray}
I_{G,i}+I_{H,i}+I_{D,i}=1,\forall i \in \mathcal{N}.
\label{operation_constraint}
\end{eqnarray}
In addition, the transmit power of the GP-BS and EH-BS in the $i$-th block is denoted as $p_{G,i}$ and $p_{H,i}$, respectively. In this work, energy consumed for purposes other than transmission, e.g. cooling and baseband signal processing, is neglected.

The EH process is modeled as successive energy packet arrivals, i.e., at the beginning of the $i$-th block, an energy packet with $E_{H,i}$ Joule arrives at the EH-BS and is available for transmission from then on. Without loss of generality, we assume $E_{H,i}$'s are independent and identically distributed (i.i.d.) within $\left[0,E_{m}\right]$ with probability density function (PDF) $f_{E_{H}}\left(e\right)$ and average harvesting power $\mathrm{P_{avg}}$. A battery with sufficiently large capacity is installed at the EH-BS. Since the energy packets cannot be used before their arrivals, the following energy causality constraint should be satisfied\footnote{One interesting extension is to incorporate the wireless backhaul energy consumption at the EH-BS, which can be handled if the backhaul energy consumption at each block is known. In this case, `$p_{H,l}$' in (\ref{EH constraint}) can be regarded as the total power consumption of both the downlink transmission and wireless backhaul in the $l$-th block.}:
\begin{eqnarray}
\sum_{l=1}^{i}p_{H,l}\tau\leq \sum_{n=1}^{i}E_{H,n},\forall i\in\mathcal{N}.
\label{EH constraint}
\end{eqnarray}

Both path loss and small-scale fading are considered. The channels are assumed to be i.i.d. block fading. Denote $\gamma_{G,i}$ and $\gamma_{H,i}$ as the small-scale fading channel gains of the GP-BS to user channel (called G-channel) and the EH-BS to user channel (called H-channel) in the $i$-th block, respectively. Thus, the channel gains of the G-channel and H-channel are given by
$h_{G,i}=g_{0}d_{G}^{-\theta}\gamma_{G,i}$ and $h_{H,i}=g_{0}d_{H}^{-\theta}\gamma_{H,i}$, respectively,
where $g_{0}$ is the path loss constant, $\theta$ is the path loss exponent, $d_{G}$ and $d_{H}$ are the distances from the GP-BS to the user and from the EH-BS to the user, respectively. In the $i$-th block, when the GP-BS is assigned to serve the user, the amount of bits that can be transmitted is given by $r\left(p_{G,i},h_{G,i}\right)$. Similarly, when the EH-BS is selected, $r\left(p_{H,i},h_{H,i}\right)$ bits can be delivered. Here,
$r\left(p,h\right)=\tau W\log_{2}\left(1+h\sigma^{-2}p\right)$ is the Shannon-Hartley formula,
where $W$ is the system bandwidth and $\sigma^{2}$ is the noise variance at the receiver.

\subsection{Performance Metric}
In most previous studies on HES communication systems, minimizing the grid energy consumption while guaranteeing certain QoS requirement is considered as a fundamental design objective \cite{JGong13,XKang1408,THan13,IAhmed1312}. However, with practical considerations as mentioned before, it may happen that some data packets could not be transmitted, i.e., the QoS requirement cannot be satisfied. For instance, the EH-BS may not have enough energy while the channel from the GP-BS is in deep fading. In this case, neither of the BSs is capable of delivering the data packet. In some real-time applications, this packet may indeed be dropped, while for other applications, this will increase the transmission delay. To take this aspect into consideration, we assume that a unit of cost will be induced once a packet is dropped. On the other hand, due to the intermittent and sporadic nature of the EH process of the EH-BS, the GP-BS needs to serve the user from time to time, which incurs one unit of cost of grid energy consumption per Joule. Hence, it is desirable to minimize both costs from the system design perspective, which, unfortunately, is impossible in general. Thus, the tradeoff between these conflicting targets is of particular interest. To characterize this tradeoff, we introduce the total service cost over the $N$-block transmission frame as the performance metric, which is the weighted sum of the grid energy cost and the packet drop cost, with weights $w_{G}$ and $w_{D}$, respectively. Specifically, the total service cost is defined as
\begin{equation}
\mathrm{TSC} \triangleq \sum_{i=1}^{N} w_{G} p_{G,i}\tau+w_{D} I_{D,i},
\end{equation}
where $w_{G} p_{G,i} \tau$ and $w_{D} I_{D,i}$ are the weighted grid energy cost and the packet drop cost in the $i$-th block, respectively. It is worthwhile to note that, although the costs of per packet drop and per Joule of grid energy consumption are assumed to be normalized in this paper, we can also handle non-normalized values by embodying them into the weights $w_{D}$ and $w_{G}$, respectively.

\begin{rmk}
By adjusting the weight of the grid energy cost, $w_{G}$, and the weight of the packet drop cost, $w_{D}$, we can achieve different grid energy consumption and QoS tradeoffs in such networks. More specifically, when $w_{G} \gg w_{D}$, the network is grid energy sensitive. On the other hand, when $w_{D} \gg w_{G}$, the network emphasizes more on the successful packet delivery, i.e., addresses more on QoS.
\end{rmk}

In the following two sections, we will investigate the TSCM problems under non-causal and causal SI settings, which will reveal the tradeoff between the grid energy consumption and the achievable QoS.

\section{Total Service Cost Minimization with Non-causal Side Information}
In this section, we will first solve the off-line TSCM problem, i.e., assuming $\{E_{H,i}\}$, $\{h_{G,i}\}$ and $\{h_{H,i}\}$ are non-causally known. This will provide insights for designing such systems, as well as serving as a performance upper bound for the online case.

Denote $\mathbf{I}_{i}\triangleq\left[I_{G,i},I_{H,i},I_{D,i}\right]$. The off-line TSCM problem is then formulated as
\begin{align}
\mathcal{P}_{1}: &\mathop{\min}_{\mathbf{I}_{i},p_{j,i}} \sum_{n=1}^N w_{G} p_{G,n}\tau +w_{D} I_{D,n}\\
&\ \mathrm{s.t.}\ (\ref{operation_constraint}) \ \mathrm{and}\ (\ref{EH constraint})\\
&\ \ \ \sum_{j\in\{G,H\}}I_{j,i}r\left(p_{j,i},h_{j,i}\right)\geq \left(1-I_{D,i}\right)R, \forall i \in \mathcal{N}\label{QoS}\\
&\ \ \ \ \ \ p_{j,i}\leq p_{j}^{\max}, \forall i \in \mathcal{N}, j\in\{G,H\}\label{PPC}\\
&\ \ \ \ \ \ I_{j,i} \in\{0,1\}, \forall i\in \mathcal{N},j\in\{G,H,D\}. \label{zeroone}
\end{align}
In $\mathcal{P}_{1}$, the objective is the total service cost over the $N$-block frame. (\ref{QoS}) is the QoS constraint, i.e., when it is decided to transmit the packet in the $i$-th block, the channel capacity should be greater than the packet size. (\ref{PPC}) and (\ref{zeroone}) are the peak power constraint and the zero-one indicator constraint, respectively.

In general, $\mathcal{P}_{1}$ is a mixed integer nonlinear programming (MINLP) problem with continuous variables $p_{j,i}$ together with binary variables $\mathbf{I}_{i}$, which is very difficult to solve. By exploiting the problem structure, we manage to transform $\mathcal{P}_{1}$ into an integer programming (IP) problem, which is a special form of the \emph{multi-dimensional Knapsack problem} \cite{multiKnapsack}, as shown in Lemma 1.
\begin{lma}
$\mathcal{P}_{1}$ can be equivalently transformed into the following IP problem:
\begin{align}
\hat{\mathcal{P}}_{1}: &\mathop{\min}_{\alpha_{i}} \sum_{n=1}^N \left(1-\alpha_{n}\right)c_{n}\label{objP4}\\
&\ \mathrm{s.t.}  \sum_{k=1}^{i} \alpha_{k}p_{H,k}^{inv}\tau \leq \sum_{l=1}^{i} E_{H,l}, \forall i \in \mathcal{N}\label{EH causal constraint3}\\
&\ \ \ \ \ \alpha_{i}p_{H,i}^{inv}\leq p_{H}^{\max}, \forall i \in \mathcal{N}\label{H PPC2}\\
&\ \ \  \ \ \alpha_{i}\in\{0,1\}, \forall i\in \mathcal{N}, \label{zeroone2}
\end{align}
where $p_{H,i}^{inv}$ is the H-channel inversion power, i.e., $p_{H,i}^{inv}=
\left(2^{\frac{R}{W \tau}}-1\right)\sigma^{2}h_{H,i}^{-1}$. $\{c_{i}\}$ is a series of cost parameters given by
\begin{eqnarray}
c_{i}=
\begin{cases}
w_{D}, & p_{G,i}^{inv}> \min\{p_{G}^{\max},w_{D} \left(w_{G} \tau \right)^{-1}\}\\
w_{G}  p_{G,i}^{inv} \tau, & p_{G,i}^{inv}\leq \min\{p_{G}^{\max},w_{D} \left(w_{G} \tau \right)^{-1}\},
\end{cases}
\label{costpara}
\end{eqnarray}
where $p_{G,i}^{inv}$ is the G-channel inversion power, i.e., $p_{G,i}^{inv}=\left(2^{\frac{R}{W\tau}}-1\right)\sigma^{2}h_{G,i}^{-1}$. Given a solution $\{\alpha_{i}\}$ to $\hat{\mathcal{P}}_{1}$, the corresponding solution to $\mathcal{P}_{1}$ is given as
\begin{equation}
\begin{split}
I_{H,i}&=\alpha_{i}\\
I_{G,i}&=\left(1-\alpha_{i}\right)\bm{1}\{p_{G,i}^{inv}\leq \kappa \}\\
I_{D,i}&=1-I_{G,i}-I_{H,i}\\
p_{j,i}&=I_{j,i}p_{j,i}^{inv}, j\in\{G,H\},
\end{split}
\label{soluP3}
\end{equation}
where $\kappa \triangleq \min\{p_{G}^{\max},w_{D}\left(w_{G} \tau\right)^{-1}\}$ and $\bm{1}\{\cdot\}$ is the indicator function. Furthermore, if $\{\alpha_{i}\}$ is optimal for $\hat{\mathcal{P}}_{1}$, $\{\mathbf{I}_{i}\}$, $\{p_{H,i}\}$ and $\{p_{G,i}\}$ are optimal for $\mathcal{P}_{1}$.
\label{lmaSecIII1}
\end{lma}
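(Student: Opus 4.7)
The plan is to reduce $\mathcal{P}_{1}$ to $\hat{\mathcal{P}}_{1}$ in two stages: first collapse the continuous power variables onto the channel-inversion powers, and then collapse the GP/drop sub-decision into a fixed per-block cost, leaving only the EH-BS indicators $\alpha_{i}\triangleq I_{H,i}$ as free variables.

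First I would argue that, for any feasible assignment $\{\mathbf{I}_{i}\}$, we may take $p_{j,i}=I_{j,i}p_{j,i}^{inv}$ without loss of optimality for $j\in\{G,H\}$. When $I_{j,i}=0$, the QoS constraint (\ref{QoS}) imposes no lower bound on $p_{j,i}$; since $p_{G,i}$ appears in the objective with a non-negative coefficient and $p_{H,i}$ appears only in the energy causality constraint (\ref{EH constraint}) with a non-negative coefficient, setting such powers to $0$ is feasible and can only help. When $I_{j,i}=1$, (\ref{QoS}) is equivalent to $p_{j,i}\geq p_{j,i}^{inv}$, and any strict excess would either increase the grid cost ($j=G$) or tighten (\ref{EH constraint}) ($j=H$); so equality is optimal. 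Consequently (\ref{PPC}) for $j=H$ becomes (\ref{H PPC2}), (\ref{EH constraint}) becomes (\ref{EH causal constraint3}), and the objective rewrites as $\sum_{i} I_{G,i}\, w_{G}p_{G,i}^{inv}\tau + I_{D,i}\,w_{D}$.

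Next, for every block with $\alpha_{i}=0$ the operation constraint (\ref{operation_constraint}) forces exactly one of $I_{G,i},I_{D,i}$ to equal $1$, and the two choices are completely decoupled across blocks (the GP-BS has no cross-block energy constraint), so the minimization is performed block-wise: pick the GP-BS only if it is both feasible ($p_{G,i}^{inv}\leq p_{G}^{\max}$) and cheaper than dropping ($w_{G}p_{G,i}^{inv}\tau\leq w_{D}$), i.e.\ iff $p_{G,i}^{inv}\leq\kappa$, and drop otherwise. This is exactly the $c_{i}$ defined in (\ref{costpara}). For blocks with $\alpha_{i}=1$, the EH-BS contributes nothing to the grid cost and nothing to the drop cost, so the per-block cost is $0$. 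Thus the total cost is $\sum_{n}(1-\alpha_{n})c_{n}$, recovering the objective (\ref{objP4}). The reconstruction (\ref{soluP3}) simply re-expresses, for each $i$, the above per-block rule: $I_{H,i}=\alpha_{i}$, and when $\alpha_{i}=0$, $I_{G,i}=\bm{1}\{p_{G,i}^{inv}\leq\kappa\}$ with $I_{D,i}$ taking the remainder.

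The one point that needs care is ensuring the decomposition remains tight when $\alpha_{i}=1$ but the GP-BS would also have been admissible and cheap. The potential worry is that counting the block's contribution as $0$ rather than $c_{i}$ might understate the savings in some cases, but since \emph{activating} the EH-BS in block $i$ replaces a cost of $c_{i}$ with a cost of $0$ regardless of what the GP-BS could have done, the saving is exactly $c_{i}$, so the two problems share the same optimal value and any $\{\alpha_{i}\}$ optimal for $\hat{\mathcal{P}}_{1}$ yields, via (\ref{soluP3}), an optimal feasible tuple for $\mathcal{P}_{1}$. I expect the main obstacle to be stating this per-block equivalence cleanly in the presence of the cross-block constraint (\ref{EH causal constraint3}); the right framing is to fix the EH usage pattern and observe that the remaining sub-problem decouples into independent scalar minimizations whose solutions are captured by $c_{i}$.
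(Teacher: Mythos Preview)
Your proposal is correct and follows essentially the same route as the paper's proof: reduce the transmit powers to the channel-inversion powers via the QoS constraint, set $\alpha_{i}=I_{H,i}$ so that (\ref{EH constraint}) and (\ref{PPC}) become (\ref{EH causal constraint3}) and (\ref{H PPC2}), and then observe that the remaining GP-BS/drop decision decouples block-by-block into the cost $c_{i}$ of (\ref{costpara}). Your write-up is in fact more explicit than the paper's (e.g., spelling out why the GP sub-problem decouples and why $p_{j,i}=0$ is optimal when $I_{j,i}=0$), and the final paragraph's worry is harmless---since the EH-BS contributes zero cost whenever $\alpha_{i}=1$ and $c_{i}\geq 0$, the per-block accounting $(1-\alpha_{i})c_{i}$ is exact.
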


\begin{proof}
See Appendix A.
\end{proof}
Lemma 1 shows that transmitting the packet with the channel inversion power will not affect the optimality. To solve $\hat{\mathcal{P}}_{1}$, we only need to decide when to assign the EH-BS, i.e., $\{\alpha_{i}\}$. If the EH-BS is not assigned to serve the user in a block, whether to drop the packet or to transmit it with the GP-BS only depends on the relationship between $p_{G,i}^{inv}$, $p_{G}^{\max}$, $w_{G}$ and $w_{D}$. These problem structures help reduce the size of the optimization variables. However, $\hat{\mathcal{P}}_{1}$ is still an NP-hard problem, as shown in the following corollary.
\begin{coro}
$\hat{\mathcal{P}}_{1}$ is an NP-hard problem.
\end{coro}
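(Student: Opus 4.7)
The plan is to establish NP-hardness by polynomial-time reduction from the classical $0$-$1$ Knapsack problem, which is well known to be NP-hard. Observe first that, since $\sum_{n=1}^{N} c_{n}$ is a constant once the cost parameters are fixed, $\hat{\mathcal{P}}_{1}$ is equivalent to maximizing $\sum_{n=1}^{N}\alpha_{n}c_{n}$ subject to (\ref{EH causal constraint3})--(\ref{zeroone2}). This reformulation makes the knapsack flavor of the problem transparent and is the starting point of the reduction.

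Next, given an arbitrary instance of $0$-$1$ Knapsack with $N$ items, positive integer weights $\{w_{k}\}_{k=1}^{N}$, positive integer values $\{v_{k}\}_{k=1}^{N}$, and capacity $W$, I would build an instance of $\hat{\mathcal{P}}_{1}$ with $N$ blocks as follows. Concentrate all harvested energy in the first block by setting $E_{H,1}=W$ and $E_{H,l}=0$ for $l\geq 2$, so that the $N$ cumulative constraints in (\ref{EH causal constraint3}) collapse to the single capacity constraint $\sum_{k=1}^{N}\alpha_{k}(p_{H,k}^{inv}\tau)\leq W$. Choose the H-channel gains $h_{H,k}$ so that $p_{H,k}^{inv}\tau=w_{k}$ for each $k$, and pick $p_{H}^{\max}$ large enough (e.g.\ larger than $\max_{k}p_{H,k}^{inv}$) so that the peak-power constraint (\ref{H PPC2}) is inactive. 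Finally, select $w_{G}$, $w_{D}$ and $p_{G}^{\max}$ sufficiently large and choose $h_{G,k}$ so that each $p_{G,k}^{inv}$ lies in the second branch of (\ref{costpara}) and $w_{G}p_{G,k}^{inv}\tau=v_{k}$, so that $c_{k}=v_{k}$. Since every parameter being tuned is a free parameter of the wireless network model, this construction is explicit and polynomial in the input size.

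Under this encoding, an assignment $\{\alpha_{k}\}$ is feasible for $\hat{\mathcal{P}}_{1}$ iff the corresponding subset $S=\{k:\alpha_{k}=1\}$ satisfies $\sum_{k\in S}w_{k}\leq W$, and its objective value $\sum_{k}\alpha_{k}c_{k}$ equals the knapsack value $\sum_{k\in S}v_{k}$. Hence an optimal solution to the reduced $\hat{\mathcal{P}}_{1}$ instance immediately yields an optimal knapsack packing, and vice versa, so a polynomial-time algorithm for $\hat{\mathcal{P}}_{1}$ would solve $0$-$1$ Knapsack in polynomial time, establishing NP-hardness.

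I expect the main subtlety to be the verification that the cost parameters $c_{k}$ can be made to equal arbitrary positive values through the parameter $(\,w_{G},w_{D},p_{G}^{\max},h_{G,k}\,)$ without inadvertently pushing $c_{k}$ into the first branch of (\ref{costpara}) (which would clip all values to $w_{D}$). This is resolved by first fixing $w_{D}$ larger than $\max_{k}v_{k}$ and $p_{G}^{\max}$ larger than $\max_{k}v_{k}/(w_{G}\tau)$, which keeps $p_{G,k}^{inv}=v_{k}/(w_{G}\tau)$ inside the threshold $\kappa=\min\{p_{G}^{\max},w_{D}/(w_{G}\tau)\}$; the same rescaling by an integer factor shows that any rational instance can be encoded, and hence that the reduction is genuinely polynomial in the bit-length of the original knapsack input.
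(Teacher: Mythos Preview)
Your proposal is correct and follows essentially the same route as the paper: both reduce from $0$-$1$ Knapsack by concentrating all harvested energy in the first block ($E_{H,1}=W$, $E_{H,l}=0$ for $l\geq 2$), which collapses the $N$ nested constraints in (\ref{EH causal constraint3}) to a single capacity constraint and makes $\hat{\mathcal{P}}_{1}$ coincide with Knapsack. Your version is simply more explicit about how to realize the desired $c_{k}$ and $p_{H,k}^{inv}$ from the underlying physical parameters, a detail the paper leaves implicit.
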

\begin{proof}
Suppose there is a polynomial-time algorithm to solve $\hat{\mathcal{P}}_{1}$, then there exists a polynomial-time algorithm to solve the following \emph{Knapsack problem}:
\begin{align}
&\mathop{\min}_{y_{i}} \sum_{n=1}^N \left(1-y_{n}\right)d_{n}\label{knapsack}\nonumber\\
&\ \mathrm{s.t.}  \sum_{k=1}^{N}  y_{k}f_{k} \leq e\nonumber\\
&\ \ \  \ \ y_{i}\in\{0,1\}, \forall i\in \mathcal{N}, \nonumber
\end{align}
with $e,d_{i},f_{i}\geq 0,\forall i\in\mathcal{N}$, which is a special case of $\hat{\mathcal{P}}_{1}$ when $E_{H,i}=0, \forall i\in\mathcal{N}\setminus \{1\}$. Since the \emph{Knapsack problem} is NP-hard \cite{KellerKnapsack}, i.e., there is no polynomial-time algorithm for the optimal solution, the assumption is violated. Therefore, $\hat{\mathcal{P}}_{1}$ is also an NP-hard problem.
\end{proof}

Though commercial solvers, e.g. MOSEK \cite{MOSEK}, can be applied to compute the optimal solution for $\hat{\mathcal{P}}_{1}$, they are known to have a worst-case exponential complexity and cannot provide insights on the obtained solution. Thus, low-complexity algorithms are needed. To assist the algorithm design, we first provide the following property of the optimal solution.
\begin{prop}
If $\{\alpha_{i}\}$ is optimal to $\hat{\mathcal{P}}_{1}$, then there exists no $j>i$, such that $\alpha_{i}=1$, $\alpha_{j}=0$, $c_{i}< c_{j}$ and $p_{H,i}^{inv}\geq p_{H,j}^{inv}$.
\label{SecIIIprop1}
\end{prop}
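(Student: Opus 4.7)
The plan is to prove Proposition \ref{SecIIIprop1} by contradiction using an exchange (swap) argument. Suppose, to the contrary, that $\{\alpha_{i}\}$ is optimal for $\hat{\mathcal{P}}_{1}$ yet there exist indices $i<j$ with $\alpha_{i}=1$, $\alpha_{j}=0$, $c_{i}<c_{j}$, and $p_{H,i}^{inv}\geq p_{H,j}^{inv}$. I will construct a competitor $\{\tilde{\alpha}_{k}\}$ by swapping the EH-BS assignments at the two blocks: $\tilde{\alpha}_{i}=0$, $\tilde{\alpha}_{j}=1$, and $\tilde{\alpha}_{k}=\alpha_{k}$ for every $k\notin\{i,j\}$. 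The rest of the argument verifies that $\{\tilde{\alpha}_{k}\}$ is feasible for $\hat{\mathcal{P}}_{1}$ and strictly lowers the objective, contradicting optimality.

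Feasibility of the peak power constraint (\ref{H PPC2}) is immediate, since the only newly activated EH-BS transmission occurs at block $j$, and by the hypothesis and the original feasibility of $\{\alpha_{i}\}$, $p_{H,j}^{inv}\leq p_{H,i}^{inv}\leq p_{H}^{\max}$. For the energy causality constraint (\ref{EH causal constraint3}), I split the cumulative check into three regimes. For $k<i$, nothing changes. For $i\leq k<j$, the cumulative EH-BS energy consumption under $\{\tilde{\alpha}_{k}\}$ is smaller than under $\{\alpha_{i}\}$ by exactly $p_{H,i}^{inv}\tau$, so the constraint continues to hold. For $k\geq j$, the net change in cumulative consumption is $(p_{H,j}^{inv}-p_{H,i}^{inv})\tau\leq 0$, again preserving feasibility. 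The key observation here is that the swap pushes energy consumption forward in time, which is the direction causality permits.

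The objective comparison is then straightforward bookkeeping: only blocks $i$ and $j$ contribute any change, and
\begin{equation}
\sum_{n=1}^{N}(1-\tilde{\alpha}_{n})c_{n}-\sum_{n=1}^{N}(1-\alpha_{n})c_{n}=c_{i}-c_{j}<0
\end{equation}
by the hypothesis $c_{i}<c_{j}$. Hence $\{\tilde{\alpha}_{k}\}$ attains a strictly smaller total service cost than $\{\alpha_{i}\}$, contradicting the supposed optimality and completing the proof.

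I do not anticipate any serious obstacle. The only step needing care is the three-regime breakdown of the energy causality check, where one must track that the deletion at block $i$ is seen for every $k\geq i$ while the addition at block $j$ is seen only for $k\geq j$, so that the budget is never exceeded at any intermediate block. The other two pieces, namely peak power preservation via $p_{H,j}^{inv}\leq p_{H,i}^{inv}$ and the objective decrease via $c_{i}<c_{j}$, follow directly from the structural hypotheses of the proposition.
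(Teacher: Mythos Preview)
Your proof is correct and follows the same exchange argument as the paper, which constructs the swapped solution $\tilde{\alpha}_{i}=0$, $\tilde{\alpha}_{j}=1$ and notes the objective decreases by $c_{j}-c_{i}$. Your version is in fact more complete: the paper asserts without justification that the swap is feasible, whereas you explicitly verify the peak power constraint and give the three-regime breakdown for the energy causality constraint.
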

\begin{proof}
Suppose $\{\alpha_{i}\}$ is optimal and there exists $j>i$, such that $\alpha_{i}=1$, $\alpha_{j}=0$, $c_{i}< c_{j}$ and $p_{H,i}^{inv}\geq p_{H,j}^{inv}$. It is always possible to construct a new solution by setting $\alpha_{i}=0$ and $\alpha_{j}=1$. With the constructed solution, the value of the objective function will decrease by $c_{j}-c_{i}$. Hence, we can conclude that $\{\alpha_{i}\}$ is not optimal, which contradicts the assumption.
\end{proof}

We call the block in which the EH-BS is assigned to serve the user as an H-block, i.e., $\alpha_{i}=1$. From Proposition \ref{SecIIIprop1}, the H-blocks should have higher values of $c_{i}$ and $h_{H,i}$. The intuition is that, we can reduce more service cost with a given amount of harvested energy. In the following, we will propose the Greedy Assignment (GA) algorithm to solve $\hat{\mathcal{P}}_{1}$, which identifies the H-blocks greedily according to a metric function $\Xi\left(p,q\right): \mathbb{R}_{+}^{2} \rightarrow \mathbb{R}_{+}$, which is non-decreasing with $p$ for a given $q$, and non-increasing with $q$
for a given $p$. With the assistance of this metric function, the solution obtained by the GA algorithm will satisfy Proposition \ref{SecIIIprop1} automatically. The GA algorithm is summarized in Algorithm \ref{GAalgorithm}.

\begin{algorithm}[h]
\caption{Greedy Assignment Algorithm.}
\begin{algorithmic}[1]
\REQUIRE
{$\mathbf{E_{H}}=\{E_{H,i}\}$, $\bm{c}=\{c_{i}\}$, $\{p_{H,i}^{inv}\}$, $p_{H}^{\max}$, $\tau$} and $N$
\ENSURE
{$\bm{\alpha}=\{\alpha_{i}\}$, and $c_{\Sigma}^{\small\mathrm{GA}}$}
\STATE Initialize $\bm{\alpha}=\mathbf{0}$, $\mathcal{N}_{\mathrm{H}^{c}}=\{1,2,...,N\}$, $\mathcal{N}_{\mathrm{feas}}=\emptyset$;
\STATE $\left[I,\mathcal{N}_\mathrm{feas}\right]=\mathrm{find\_feas}\left(\mathbf{E_{H}},\bm{\alpha},\{p_{H,i}^{inv}\},p_{H}^{\max},\tau,\mathcal{N}_{\mathrm{H}^{c}}\right)$;
\STATE \textbf{While} {$I==1$} \textbf{do}
\STATE \hspace{10pt}pick $m$, such that $m=\arg\max\limits_{i\in\mathcal{N}_{\mathrm{feas}}} \Xi\left(c_{i},p_{H,i}^{inv}\right)$;
\STATE \hspace{10pt}$\bm{\alpha}=\bm{\alpha}+\bm{e}_{m}$;
\STATE \hspace{10pt}$\mathcal{N}_{\mathrm{H}^{c}}=\mathcal{N}_{\mathrm{H}^{c}}\setminus\{m\}$;
\STATE \hspace{10pt}$\left[I,\mathcal{N}_\mathrm{feas}\right]=\mathrm{find\_feas}\left(\mathbf{E_{H}},\bm{\alpha},\{p_{H,i}^{inv}\},p_{H}^{\max},\tau,\mathcal{N}_{\mathrm{H}^{c}}\right)$;
\STATE \textbf{End while}
\STATE $c_{\Sigma}^{\small\mathrm{GA}}=\left(\mathbf{1}-\bm{\alpha}\right)^{\mathrm{T}}\bm{c}$;
\STATE Return $\bm{\alpha}=\{\alpha_{i}\}$, $c_{\Sigma}^{\small\mathrm{GA}}$.
\end{algorithmic}
\label{GAalgorithm}
\end{algorithm}

In Algorithm \ref{GAalgorithm}\footnote{$\bm{y}=\{y_{i}\}$ represents an $N\times 1$ column vector, $\bm{e}_{m}$ is a basic vector with the $m$-th entry equal to 1, and $\emptyset$ is the empty set.}, we initialize $\bm{\alpha}$ by setting all blocks not to be H-blocks. $\mathcal{N}_{\mathrm{H}^{c}}$ is the set of blocks that have not yet been selected as H-blocks. The function $\mathrm{find\_feas}\left(\cdot\right)$ finds the block from $\mathcal{N}_{\mathrm{H}^{c}}$, which is feasible if it is selected as an H-block given the current $\bm{\alpha}$, and returns the feasible block index set $\mathcal{N}_{\mathrm{feas}}$ together with an indicator $I$, which can be explicitly written as
\begin{equation}
\begin{split}
&\mathcal{N}_{\mathrm{feas}}=\\
&\bigg\{i\bigg|i\in\mathcal{N}_{\mathrm{H}^{c}},p_{H,i}^{inv}{\leq} p_{H}^{\max}, \sum_{k=1}^{n}\alpha_{k}'p_{H,k}^{inv}\tau{\leq} \sum_{l=1}^{n}E_{H,l}, \forall n\in\mathcal{N}\bigg\},
\end{split}
\end{equation}
where $\bm{\alpha}'=\bm{\alpha}+\bm{e}_{i}$ and $I=\bm{1}\{\mathcal{N}_{\mathrm{feas}} \neq \emptyset\}$. In each loop, the block from $\mathcal{N}_{\mathrm{feas}}$ with the highest value of $\Xi\left(c_{i},p_{H,i}^{inv}\right)$ is selected as the H-block. Key properties of the GA algorithm are listed as follows:
\begin{enumerate}
\item
As an H-block is selected in each while loop, the GA algorithm will be terminated within at most $N$ loops. In each loop, the block with the maximum value of the metric function will be identified as an H-block and the blocks that are still feasible will be found. Thus, the complexity of the GA algorithm is $\mathcal{O}\left(N^{2}\right)$.
\item
From extensive simulation results provided in Section V, the GA algorithm achieves near-optimal performance.
\item
When $h_{H,i}=h_{H}$ or $h_{G,i}=h_{G},\forall i\in\mathcal{N}$, i.e., when one of the channels is constant, the GA algorithm is optimal. The proof is given in Appendix B.\footnote{Although the proposed low-complexity GA algorithm is only optimal in some special cases, i.e., it is not guaranteed to be a performance upper bound, its close-to-optimal performance (which can be observed from the simulation results in Section V) demonstrates its capability to benchmark the performance achieved by the online algorithms without the need of calling for high-complexity commercial solvers.}
\end{enumerate}

The off-line formulation requires full SI, which is too restrictive and thus impractical. In the next section, we will turn our attention to the online case, where SI is only causally known. As will be seen latter, the intuition obtained from the GA algorithm, i.e., the EH-BS should serve the user when the H-channel condition is good while the G-channel condition is poor, matches the structural properties of the optimal online solution and helps to develop a low-complexity suboptimal online policy.

\section{Total Service Cost Minimization with causal Side Information}
In this section, we focus on the online case, where the EH profile and channel states are causally known. Consequently, we are interested in minimizing the expected total service cost. Thus, the online TSCM problem can be formulated as
\begin{align}
\mathcal{P}_{2}: &\mathop{\min}_{\bm{\pi}\in \Pi} \mathbb{E}^{\bm{\pi}}\left[\sum_{n=1}^N \left(1-\alpha_{n}\right)c_{n}\right]\label{objP2},
\end{align}
where $\bm{\pi}$ denotes a feasible policy (which will be specified later), $\Pi$ represents the set of all feasible policies, $c_{n}$ is defined in (\ref{costpara}) and the expectation is over all the randomness involved. In principle, the optimal solution of this problem can be obtained by finite-horizon dynamic programming (DP) algorithms, which, however, suffer from the \emph{curse of dimensionality}. To avoid the high computational complexity of the DP solutions, we will reformulate the online optimization problem as a discrete Markov decision process (MDP) problem, where the system is described by a finite number of states in contrast to DP.

\subsection{A Discrete MDP Approach}
Denote the battery capacity as $B_{m}$. The MDP formulation consists of the following components:

1) \emph{System state:}
The system state in the $i$-th block, denoted as $\bm{s}_{i}$, is represented by a triplet, i.e.,
\begin{equation}
\bm{s}_{i}\triangleq \langle \epsilon_{i} , \gamma_{G,i} , \gamma_{H,i}\rangle,
\label{stateMDP}
\end{equation}
 where $\epsilon_{i}$ is the battery energy state. We partition the battery capacity into $M$ intervals with equal lengths, i.e.,
$\left[0,B_{m}\right]=\left[0,B_{m}/M\right)\cup\ldots\cup\left[\left(M-1\right)B_{m}/{M},B_{m}\right]$, and each of them is an energy state represented by its mid-value, i.e., $\epsilon_{i}\in\{B_{m}/{2M},\ldots,\left(2M-1\right)B_{m}/{2M}\}$. The small-scale fading channel gain is quantized into $K$ non-overlapping intervals
using the equi-probable steady state method \cite{HSWang95}, and each interval is a channel state represented by the corresponding mean value $H_{k}$, i.e., $\gamma_{G,i}, \gamma_{H,i} \in\{H_{1},\ldots,H_{K}\}$ and $\mathrm{Pr}\{\gamma_{j,i}=H_{k}\}=1/K$.

2) \emph{Action space:}
The action in the MDP problem is $\alpha_{i}$, and the action space is $\{0,1\}$. The allowable action space $\mathcal{A}_{\bm{s}_{i}}$, is the feasible action set in the state $\bm{s}_{i}$, i.e.,
\begin{equation}
\mathcal{A}_{\bm{s}_{i}}=
\begin{cases}
\{0\}, &p_{H,i}^{inv}> \min\{\epsilon_{i}\slash\tau,p_{H}^{\max}\}\\
\{0,1\}, &{\rm{otherwise}},
\end{cases}
\end{equation}
which indicates that the EH-BS can transmit only when the harvested energy is sufficient and the peak power constraint is not violated as well.

3) \emph{State transition probability:}
The state transition probability, denoted as $p\left(\bm{s}_{i+1}|\bm{s}_{i},\alpha_{i}\right)$, is the probability that the system will be in state $\bm{s}_{i+1}$ in the $\left(i+1\right)$-th block, given that in the $i$-th block it was in state $\bm{s}_{i}$ and action $\alpha_{i}$ was taken.
Since the i.i.d. block fading channel is quantized into equal probability states, we have
\begin{equation}
\begin{split}
&p\left(\bm{s}_{i+1}|\bm{s}_{i},\alpha_{i}\right)\\&=\mathrm{Pr}\{\gamma_{G,i+1}=H_{m}\}\cdot \mathrm{Pr}\{\gamma_{H,i+1}=H_{n}\}\cdot p\left(\epsilon_{i+1}|\bm{s}_{i},\alpha_{i}\right)\\
&=K^{-2}\cdot p\left(\epsilon_{i+1}|\bm{s}_{i},\alpha_{i}\right), \forall m,n=1,\cdots,K,
\end{split}
\end{equation}
where $p\left(\epsilon_{i+1}|\bm{s}_{i},\alpha_{i}\right)$ is the energy state transition probability. The energy state evolves according to $\epsilon_{i+1}=Q\left(\epsilon_{i}-\alpha_{i}p_{H,i}^{inv}\tau+E_{H,i+1}\right)$, where
\begin{equation}
\begin{split}
&Q\left(\varepsilon\right)\\&=\left(2\min\bigg\{\bigg\lfloor\frac{M \min\{\varepsilon,B_{m}\}}{B_{m}}\bigg \rfloor+1,M\bigg\}-1\right)\cdot\frac{B_{m}}{2M},
\end{split}
\end{equation}
and $\lfloor x \rfloor$ is the floor function.  $Q\left(\varepsilon\right)$ is a quantization function mapping $\varepsilon$ to the battery energy states, which is non-decreasing over $\varepsilon$.

4) \emph{Cost function, decision rule and policy:} The cost function is $c\left(\bm{s}_{i},\alpha_{i}\right)=\left(1-\alpha_{i}\right)c_{i}$, where $c_{i}$ depends on $\bm{s}_{i}$ as given by (\ref{costpara}). The decision rule is a mapping from the state space to the action space, i.e., $d_{i}: \bm{s}_{i}\rightarrow \alpha_{i}$. A policy $\bm{\pi}$ is a sequence of decision rules for each block, i.e., $\bm{\pi}=\{d_{1}^{\bm{\pi}}\left(\bm{s}_{1}\right),...,d_{N}^{\bm{\pi}}\left(\bm{s}_{N}\right)\}$. The set of all admissible policy is denoted as $\Pi$. If the state in the first block is $\bm{s}_{1}$, the expected total service cost can be written as
\begin{eqnarray}
c_{\Sigma}^{\bm{\pi}}\left(\bm{s}_{1}\right)=
\mathbb{E}^{\bm{\pi}}_{\bm{s}}\left[{\sum_{n=1}^{N}c\left(\bm{s}_{n},d_{n}^{\bm{\pi}}\left(\bm{s}_{n}\right)\right)}\bigg|\bm{s}_{1}\right],
\end{eqnarray}
where the expectation is over all possible state sequences $\bm{s}=\{\bm{s}_{1},...,\bm{s}_{N}\}$ induced by $\bm{\pi}$. The optimal policy and the optimal expected total service cost can be expressed as
\begin{equation}
\bm{\pi}^{*}=\arg\mathop{\min}_{\bm{\pi}\in\Pi} c_{\Sigma}^{\bm{\pi}}\left(\bm{s}_{1}\right)\ \mathrm{and}\  c^{*}_{\Sigma}\left(\bm{s}_{1}\right)=\mathop{\min}_{\bm{\pi}\in\Pi}c_{\Sigma}^{\bm{\pi}}\left(\bm{s_{1}}\right),
\end{equation}
respectively. The cost-to-go function in the $i$-th block, which is the sum of the expected total service cost from the $i$-th block to the last block, is given by
\begin{equation}
\begin{split}
&u_{i}^{\bm{\pi}}\left(\bm{s}_{i}\right)=\\
&\begin{cases}
c\left(\bm{s}_{i},d_{i}^{\bm{\pi}}\left(\bm{s}_{i}\right)\right)+
\sum\limits_{\bm{s}'}p\left(\bm{s}'|\bm{s}_{i},d_{i}^{\bm{\pi}}\left(\bm{s}_{i}\right)\right)
u^{\bm{\pi}}_{i+1}\left(\bm{s}'\right),&i<N \\
c\left(\bm{s}_{N},d_{N}^{\bm{\pi}}\left(\bm{s}_{N}\right)\right),&i=N.
\end{cases}
\end{split}
\end{equation}
According to the Principle of Optimality, the optimal cost-to-go functions should satisfy the following Bellman's equations \cite{BertsekasDP}:
\begin{equation}
\begin{split}
&u_{i}^{*}\left(\bm{s}_{i}\right)=\\
&\begin{cases}
\min\limits_{\alpha_{i}\in \mathcal{A}_{\bm{s}_{i}}}
\bigg\{c\left(\bm{s}_{i},\alpha_{i}\right)+\sum\limits_{\bm{s}'} p\left(\bm{s}'|\bm{s}_{i},\alpha_{i}\right)u_{i+1}^{*}\left(\bm{s}'\right)\bigg\},&i<N\\
\min\limits_{\alpha_{N}\in \mathcal{A}_{\bm{s}_{N}}}c\left(\bm{s}_{N},\alpha_{N}\right),&i=N.
\end{cases}
\label{Bellman}
\end{split}
\end{equation}
The optimal policy $\bm{\pi}^{*}$ of this MDP problem can be obtained by solving (\ref{Bellman}) recursively with the backward induction algorithm (BIA) \cite{BertsekasDP}, in which, the cost-to-go functions are evaluated over all possible states in each block. Next, we will derive interesting monotone structures of $\bm{\pi}^{*}$, which will help us design computationally efficient algorithms by reducing such evaluations.
Firstly, $\bm{\pi}^{*}$ is monotone over the small-scale fading of the G-channel, as specified in Proposition \ref{propSecIVA1}.

\begin{prop}
For a given battery energy state $\epsilon_{i}$, if the H-channel is in state $\gamma_{H,i}$, then the optimal policy is monotone over $\gamma_{G,i}$. Specifically, the EH-BS will be assigned to serve the user, i.e., $\alpha_{i}=1$, only when $\gamma_{G,i}\leq \Gamma_{G,i}\left(\epsilon_{i},\gamma_{H,i}\right)$, where $\Gamma_{G,i}\left(\epsilon_{i},\gamma_{H,i}\right)$ is the threshold G-channel state.
\label{propSecIVA1}
\end{prop}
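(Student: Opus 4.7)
The plan is to analyze the one-step optimality condition derived from the Bellman equation (\ref{Bellman}) and exhibit the sought threshold structure as a direct consequence of two observations: (i) the state transition kernel is independent of $\gamma_{G,i}$, and (ii) the stage cost $c_i$ is non-increasing in $\gamma_{G,i}$. Concretely, I will introduce the action--value function
\begin{equation}
Q_i(\bm{s}_i,\alpha_i)\triangleq c(\bm{s}_i,\alpha_i)+\sum_{\bm{s}'}p(\bm{s}'|\bm{s}_i,\alpha_i)\,u_{i+1}^{*}(\bm{s}'),\nonumber
\end{equation}
so that, by (\ref{Bellman}), action $\alpha_i=1$ is optimal in state $\bm{s}_i$ if and only if $Q_i(\bm{s}_i,1)\leq Q_i(\bm{s}_i,0)$. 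Using $c(\bm{s}_i,\alpha_i)=(1-\alpha_i)c_i$, this rearranges to $c_i\geq \Delta_i(\epsilon_i,\gamma_{H,i})$, where
\begin{equation}
\Delta_i(\epsilon_i,\gamma_{H,i})\triangleq\sum_{\bm{s}'}\bigl[p(\bm{s}'|\bm{s}_i,1)-p(\bm{s}'|\bm{s}_i,0)\bigr]\,u_{i+1}^{*}(\bm{s}').\nonumber
\end{equation}

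Next I would argue that $\Delta_i$ is independent of $\gamma_{G,i}$. Since $\gamma_{G,i+1}$ and $\gamma_{H,i+1}$ are i.i.d.\ across blocks and $\epsilon_{i+1}=Q(\epsilon_i-\alpha_i p_{H,i}^{inv}\tau+E_{H,i+1})$ with $p_{H,i}^{inv}$ determined by $\gamma_{H,i}$ alone, the joint distribution of $\bm{s}_{i+1}$ given $(\bm{s}_i,\alpha_i)$ is a function of $(\epsilon_i,\gamma_{H,i},\alpha_i)$ only. Hence the RHS of the inequality does not vary with $\gamma_{G,i}$ once $(\epsilon_i,\gamma_{H,i})$ is fixed.

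On the LHS, I would verify that $c_i$ is non-increasing in $\gamma_{G,i}$. From $p_{G,i}^{inv}=(2^{R/(W\tau)}-1)\sigma^{2}/(g_0 d_G^{-\theta}\gamma_{G,i})$, the inversion power is strictly decreasing in $\gamma_{G,i}$; plugging into the piecewise definition (\ref{costpara}) shows that $c_i$ equals the constant $w_D$ for small $\gamma_{G,i}$ (the grid branch is too expensive or infeasible) and equals $w_G p_{G,i}^{inv}\tau$ for large $\gamma_{G,i}$, the latter being decreasing in $\gamma_{G,i}$. Therefore $c_i$ is a non-increasing function of $\gamma_{G,i}$, and the set $\{\gamma_{G,i}:c_i(\gamma_{G,i})\geq \Delta_i(\epsilon_i,\gamma_{H,i})\}$ is a lower set; its supremum defines the desired threshold $\Gamma_{G,i}(\epsilon_i,\gamma_{H,i})$ (possibly $0$ or $H_K$ in edge cases, meaning ``never'' or ``always'' pick the EH-BS).

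I do not expect a serious obstacle here; the only points deserving care are the clean verification that $p(\bm{s}'|\bm{s}_i,\alpha_i)$ truly has no dependence on $\gamma_{G,i}$ (which relies squarely on the i.i.d.\ block-fading and i.i.d.\ EH assumptions used throughout Section II) and the handling of ties in the comparison $c_i\gtreqless \Delta_i$ when placing the boundary into the threshold set. Both are routine once the monotonicity of $c_i$ in $\gamma_{G,i}$ is explicitly recorded.
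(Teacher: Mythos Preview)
Your proposal is correct and uses essentially the same approach as the paper: the paper proves that if $d_i^{*}(\langle\epsilon_i,\gamma_{G,i}^{+},\gamma_{H,i}\rangle)=1$ then $d_i^{*}(\langle\epsilon_i,\gamma_{G,i}^{-},\gamma_{H,i}\rangle)=1$ by invoking exactly your two observations, namely $p(\bm{s}'|\bm{s}_i,\alpha)$ is independent of $\gamma_{G,i}$ and $c(\bm{s}_i,0)=c_i$ is non-increasing in $\gamma_{G,i}$ (together with $c(\bm{s}_i,1)=0$). Your presentation via the action--value function $Q_i$ and the explicit inequality $c_i\geq\Delta_i(\epsilon_i,\gamma_{H,i})$ is a clean repackaging of the same comparison; just make sure to note that the allowable action space $\mathcal{A}_{\bm{s}_i}$ depends only on $(\epsilon_i,\gamma_{H,i})$, so feasibility of $\alpha_i=1$ is also constant in $\gamma_{G,i}$, which closes the argument when $1\notin\mathcal{A}_{\bm{s}_i}$.
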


\begin{proof}
Suppose $\gamma_{G,i}^{+}\geq \gamma_{G,i}^{-}$, and denote $\bm{s}_{i}^{+}=\langle\epsilon_{i},\gamma_{G,i}^{+},\gamma_{H,i}\rangle$ and
$\bm{s}_{i}^{-}=\langle\epsilon_{i},\gamma_{G,i}^{-},\gamma_{H,i}\rangle$. To prove Proposition \ref{propSecIVA1}, it suffices to show if $d_{i}^{*}\left(\bm{s}_{i}^{+}\right)=1$, then $d_{i}^{*}\left(\bm{s}_{i}^{-}\right)=1$. For $i=N$, this result is straightforward. For $i<N$, since $d_{i}^{*}\left(\bm{s}_{i}^{+}\right)=1$, according to (\ref{Bellman}),
\begin{equation}
\begin{split}
&c\left(\bm{s}_{i}^{+},1\right)+\sum_{\bm{s}'}p\left(\bm{s}'|\bm{s}_{i}^{+},1\right)u^{*}_{i+1}\left(\bm{s}'\right)
\\&\leq c\left(\bm{s}_{i}^{+},0\right)+\sum_{\bm{s}'}p\left(\bm{s}'|\bm{s}_{i}^{+},0\right)u^{*}_{i+1}\left(\bm{s}'\right).
\end{split}
\end{equation}
Since $c\left(\bm{s}_{i}^{+},1\right)=c\left(\bm{s}_{i}^{-},1\right)=0$, $c\left(\bm{s}_{i}^{+},0\right)\leq c\left(\bm{s}_{i}^{-},0\right)$ and $p\left(\bm{s}|\bm{s}_{i}^{+},\alpha\right)=p\left(\bm{s}|\bm{s}_{i}^{-},\alpha\right),\forall \alpha, \bm{s}$,
\begin{equation}
\begin{split}
&c\left(\bm{s}_{i}^{-},1\right)+\sum\limits_{\bm{s}'}p\left(\bm{s}'|\bm{s}_{i}^{-},1\right)u^{*}_{i+1}\left(\bm{s}'\right)
\\&\leq c\left(\bm{s}_{i}^{-},0\right)+\sum\limits_{\bm{s}'}p\left(\bm{s}'|\bm{s}_{i}^{-},0\right)u^{*}_{i+1}\left(\bm{s}'\right),
\end{split}
\end{equation}
i.e., $d^{*}_{i}\left(\bm{s}_{i}^{-}\right)=1$, which ends the proof.
\end{proof}

The following two lemmas will help to derive the monotone structure of $\bm{\pi}^{*}$ over the H-channel state $\gamma_{H,i}$.
\begin{lma}
Suppose $f\left(\epsilon\right)$ is a non-increasing function of the energy state $\epsilon$. If $0\leq u_{1}\leq u_{2}\leq \epsilon_{0}$, then $\sum_{\epsilon_{1}}p\left(\epsilon_{1}|\epsilon_{0},u_{1}\right)f\left(\epsilon_{1}\right)\leq
\sum_{\epsilon_{2}}p\left(\epsilon_{2}|\epsilon_{0},u_{2}\right)f\left(\epsilon_{2}\right)$, where $p\left(\epsilon_{n}|\epsilon_{0},u_{n}\right)$ is the probability that the battery will be in energy state $\epsilon_{n}$ in the next block given the current energy state $\epsilon_{0}$ and the amount of harvested energy ($u_{n}$) consumed in the current block ($n=1,2$).
\label{lmaSecIVA1}
\end{lma}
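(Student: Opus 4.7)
The plan is to prove the inequality by a simple coupling argument that exploits the monotonicity of the quantization map $Q(\cdot)$ already established in the system model. Both transition kernels $p(\epsilon_n \mid \epsilon_0, u_n)$ are induced by the same energy-arrival distribution $f_{E_H}(\cdot)$ applied to the deterministic update rule $\epsilon_1 = Q(\epsilon_0 - u + E_H)$, so the only difference between the two sides is the deterministic offset $u_n$. This means I can re-express both sides as expectations with respect to a common random variable and then compare them path by path.

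First, I would rewrite the two sides in their "pathwise" form. Letting $E_H$ denote the (common) next energy arrival, the left-hand side equals $\mathbb{E}_{E_H}\bigl[f\bigl(Q(\epsilon_0 - u_1 + E_H)\bigr)\bigr]$ and the right-hand side equals $\mathbb{E}_{E_H}\bigl[f\bigl(Q(\epsilon_0 - u_2 + E_H)\bigr)\bigr]$. Because $u_1,u_2 \le \epsilon_0$, both arguments lie in $[0, B_m + E_m]$ (after capping inside $Q$), so the expressions are well defined on the energy-state grid.

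Second, I would invoke the hypothesis $u_1 \le u_2$ together with the explicit statement in the paper that $Q(\varepsilon)$ is non-decreasing in $\varepsilon$. For every realization of $E_H$,
\begin{equation}
\epsilon_0 - u_1 + E_H \;\ge\; \epsilon_0 - u_2 + E_H,
\end{equation}
and therefore $Q(\epsilon_0 - u_1 + E_H) \ge Q(\epsilon_0 - u_2 + E_H)$ pathwise. Combining this with the assumed monotonicity of $f$ yields
\begin{equation}
f\bigl(Q(\epsilon_0 - u_1 + E_H)\bigr) \;\le\; f\bigl(Q(\epsilon_0 - u_2 + E_H)\bigr)
\end{equation}
for every sample path. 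Taking expectation with respect to $E_H$ on both sides preserves the inequality and delivers the claim.

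There is no real obstacle here: the energy-state space is discrete, but the coupling is performed on the continuous arrival $E_H$ before quantization, so no issue arises from the finite grid. The only subtlety worth flagging explicitly in the write-up is that $p(\epsilon_n \mid \epsilon_0, u_n)$ depends on $u_n$ only through the deterministic shift inside $Q$ — i.e., that the distribution of $E_H$ does not depend on the action — which is exactly the i.i.d.\ assumption on $\{E_{H,i}\}$ stated in Section II.A. Once that is pointed out, the rest of the argument is a one-line pathwise comparison.
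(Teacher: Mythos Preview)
Your proposal is correct and follows essentially the same approach as the paper: both rewrite $\sum_{\epsilon_n} p(\epsilon_n \mid \epsilon_0, u_n) f(\epsilon_n)$ as $\int f_{E_H}(e)\, f\!\bigl(Q(\epsilon_0 - u_n + e)\bigr)\, de$ and then invoke the non-decreasing property of $Q(\cdot)$ together with the non-increasing property of $f(\cdot)$ to obtain the pathwise inequality. The paper's proof is a terse sketch of exactly the coupling argument you have written out in full.
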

\begin{proof}
The proof can be obtained by rewriting $\sum_{\epsilon_{n}}p\left(\epsilon_{n}|\epsilon_{0},u_{n}\right)f\left(\epsilon_{n}\right)$ as
\begin{equation}
\int_{e}f_{E_{H}}\left(e\right)f\left(Q\left(\epsilon_{0}-u_{n}+e\right)\right)de
\end{equation}
for $n=1,2$ and utilizing the non-decreasing property of $Q\left(\varepsilon\right)$.
\end{proof}

Define $\hat{u}_{i}^{*}\left(\epsilon\right)\triangleq\sum\limits_{\gamma_{G,i},\gamma_{H,i}}u^{*}_{i}\left(\langle\epsilon
,\gamma_{G,i},\gamma_{H,i}\rangle\right)$, which can be regarded as the normalized expected total service cost from the $i$-th block given the battery is in energy state $\epsilon$, and the future expected total service cost can be represented in terms of $\hat{u}_{i+1}^{*}\left(\epsilon\right)$ as
\begin{equation}
\begin{split}
&\ \ \ \sum_{\bm{s}'}p\left(\bm{s}'|\bm{s}_{i},\alpha_{i}\right)u_{i+1}^{*}\left(\bm{s}'\right)\\
&=K^{-2}\sum_{\epsilon'}p\left(\epsilon'|\bm{s}_{i},\alpha_{i}\right)\sum_{\gamma_{G}',\gamma_{H}'}u^{*}_{i+1}\left(\langle \epsilon',\gamma_{G}',\gamma_{H}'\rangle\right)\\
&=K^{-2}\sum_{\epsilon'}
p\left(\epsilon'|\bm{s}_{i},\alpha_{i}\right)\hat{u}_{i+1}^{*}\left(\epsilon'\right).
\end{split}
\end{equation}
The following lemma shall demonstrate that a higher energy state leads to a lower expected total service cost.

\begin{lma}
$\forall i\in\mathcal{N}$, $\hat{u}_{i}^{*}\left(\epsilon\right)$ is non-increasing with $\epsilon$.
\label{lmaSecIVA2}
\end{lma}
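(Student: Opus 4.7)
The plan is to prove the stronger pointwise statement that $u_i^*(\langle \epsilon,\gamma_G,\gamma_H\rangle)$ is non-increasing in $\epsilon$ for every fixed channel pair $(\gamma_G,\gamma_H)$, from which $\hat{u}_i^*(\epsilon)$ inherits monotonicity by summation over the $K^2$ channel states. I would proceed by backward induction on the block index $i$.

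For the base case $i=N$, one has $u_N^*(\langle\epsilon,\gamma_G,\gamma_H\rangle)=\min_{\alpha\in\mathcal{A}_{\bm{s}_N}}(1-\alpha)c_N$. As $\epsilon$ increases, the feasibility condition $p_{H,N}^{inv}\leq\min\{\epsilon/\tau,p_H^{\max}\}$ only becomes easier to satisfy, so the allowable action set $\mathcal{A}_{\bm{s}_N}$ can only enlarge; since $c_N\geq 0$ is independent of $\epsilon$, enlarging the feasible set can only lower the minimum, and the base case is immediate.

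For the inductive step, suppose $\hat{u}_{i+1}^*$ is non-increasing. Fix $\epsilon^+\geq\epsilon^-$, channel states $\gamma_G,\gamma_H$, and set $\bm{s}_i^{\pm}=\langle\epsilon^{\pm},\gamma_G,\gamma_H\rangle$; let $\alpha^-\in\mathcal{A}_{\bm{s}_i^-}$ be an optimizer of the Bellman equation at $\bm{s}_i^-$. I would split on $\alpha^-$: if $\alpha^-=0$ then action $0$ is trivially feasible at $\bm{s}_i^+$ with the same immediate cost $c_i$; if $\alpha^-=1$ then $p_{H,i}^{inv}\leq\epsilon^-/\tau\leq\epsilon^+/\tau$ shows action $1$ remains feasible at $\bm{s}_i^+$ with the same zero immediate cost. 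In either case, $u_i^*(\bm{s}_i^+)$ is upper-bounded by evaluating the Bellman expression at $\bm{s}_i^+$ with the same action $\alpha^-$, so the required inequality reduces to
\[
\sum_{\epsilon'}p(\epsilon'\,|\,\bm{s}_i^+,\alpha^-)\,\hat{u}_{i+1}^*(\epsilon')\;\leq\;\sum_{\epsilon'}p(\epsilon'\,|\,\bm{s}_i^-,\alpha^-)\,\hat{u}_{i+1}^*(\epsilon').
\]
This is the key estimate, and I would establish it by exactly the change-of-variables argument used in Lemma 3: rewrite each side as $\int f_{E_H}(e)\,\hat{u}_{i+1}^*(Q(\epsilon^{\pm}-\alpha^- p_{H,i}^{inv}\tau+e))\,de$, use the non-decreasing property of $Q$ to obtain $Q(\epsilon^+-\alpha^- p_{H,i}^{inv}\tau+e)\geq Q(\epsilon^--\alpha^- p_{H,i}^{inv}\tau+e)$ pointwise in $e$, and then apply the induction hypothesis (which flips the inequality under a non-increasing function) and integrate.

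The main technical obstacle, such as it is, is the mild extension of Lemma 3 from a comparison across two different energy-consumption levels with the same starting state $\epsilon_0$ to a comparison across two different starting states $\epsilon^{\pm}$ with the same action; since Lemma 3's proof only uses the monotonicity of $Q$ and the sign of the shift inside $Q$, this extension is essentially automatic. Summing the resulting pointwise inequality $u_i^*(\langle\epsilon^+,\gamma_G,\gamma_H\rangle)\leq u_i^*(\langle\epsilon^-,\gamma_G,\gamma_H\rangle)$ over all $\gamma_G,\gamma_H\in\{H_1,\dots,H_K\}$ yields $\hat{u}_i^*(\epsilon^+)\leq\hat{u}_i^*(\epsilon^-)$, closing the induction.
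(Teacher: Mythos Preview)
Your argument is correct and follows essentially the same route as the paper's proof in Appendix~C: backward induction on $i$, establishing the pointwise inequality $u_i^*(\bm{s}_i^+)\le u_i^*(\bm{s}_i^-)$ for each channel pair and then summing, with the inductive step hinging on the stochastic-dominance estimate $\sum_{\epsilon'}p(\epsilon'|\bm{s}_i^+,\alpha)\hat u_{i+1}^*(\epsilon')\le\sum_{\epsilon'}p(\epsilon'|\bm{s}_i^-,\alpha)\hat u_{i+1}^*(\epsilon')$ obtained from the integral representation and the monotonicity of $Q$. Your organization is in fact tidier than the paper's: rather than enumerating all four combinations $\langle\alpha_{n-1}^{+*},\alpha_{n-1}^{-*}\rangle\in\{0,1\}^2$ as Appendix~C does, you simply take the optimizer $\alpha^-$ at the lower-energy state, observe it remains feasible at $\bm{s}_i^+$ with identical immediate cost, and compare; this collapses the four cases into one. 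One small correction: the ``change-of-variables argument'' you invoke is the proof of Lemma~\ref{lmaSecIVA1} (Lemma~2), not Lemma~3, which is the statement you are proving.
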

\begin{proof}
See Appendix C.
\end{proof}

With the assistance of Lemma \ref{lmaSecIVA1} and Lemma \ref{lmaSecIVA2}, we can obtain the monotone structure of  $\bm{\pi}^{*}$ over the H-channel state $\gamma_{H,i}$ as shown in the following proposition.

\begin{prop}
For a given battery energy state $\epsilon_{i}$, if the G-channel is in state $\gamma_{G,i}$, then the optimal policy is monotone over $\gamma_{H,i}$. Specifically, the EH-BS will be assigned to serve the user, i.e., $\alpha_{i}=1$, only when $\gamma_{H,i}\geq \Gamma_{H,i}\left(\epsilon_{i},\gamma_{G,i}\right)$, where $\Gamma_{H,i}\left(\epsilon_{i},\gamma_{G,i}\right)$ is the threshold H-channel state.
\label{propSecIVA2}
\end{prop}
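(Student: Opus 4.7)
The plan is to adapt the argument from Proposition \ref{propSecIVA1} by contrasting two system states that differ only in the H-channel realization. Fix $\epsilon_{i}$ and $\gamma_{G,i}$, pick $\gamma_{H,i}^{+}\geq \gamma_{H,i}^{-}$, and set $\bm{s}_{i}^{\pm}=\langle \epsilon_{i},\gamma_{G,i},\gamma_{H,i}^{\pm}\rangle$. I would aim to show that $d_{i}^{*}(\bm{s}_{i}^{-})=1$ implies $d_{i}^{*}(\bm{s}_{i}^{+})=1$, which yields the threshold $\Gamma_{H,i}(\epsilon_{i},\gamma_{G,i})$. The terminal case $i=N$ falls out once I verify that $1\in \mathcal{A}_{\bm{s}_{i}^{-}}$ implies $1\in \mathcal{A}_{\bm{s}_{i}^{+}}$, which in turn follows because $p_{H,i}^{inv}=(2^{R/(W\tau)}-1)\sigma^{2}h_{H,i}^{-1}$ is non-increasing in $\gamma_{H,i}$, so any action satisfying both the energy-causality and the peak-power checks at $\gamma_{H,i}^{-}$ remains feasible at $\gamma_{H,i}^{+}$.

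The key structural difference from Proposition \ref{propSecIVA1} is that here the immediate-cost side is easy while the transitions are the subtle part. Indeed, $c(\bm{s}_{i}^{+},0)=c(\bm{s}_{i}^{-},0)$ (the grid/drop cost depends only on $\gamma_{G,i}$) and $c(\bm{s}_{i}^{\pm},1)=0$, but the transition probabilities under action $1$ differ because a larger $\gamma_{H,i}$ drains less battery. Writing $u^{\pm}=p_{H,i}^{inv}(\gamma_{H,i}^{\pm})\tau$, feasibility at $\bm{s}_{i}^{-}$ gives $0\leq u^{+}\leq u^{-}\leq \epsilon_{i}$. Factoring out the independent G- and H-channel marginals via
\[\sum_{\bm{s}'}p(\bm{s}'|\bm{s}_{i},1)u_{i+1}^{*}(\bm{s}')=K^{-2}\sum_{\epsilon'}p(\epsilon'|\epsilon_{i},u)\hat{u}_{i+1}^{*}(\epsilon'),\]
Lemma \ref{lmaSecIVA2} supplies the monotonicity of $\hat{u}_{i+1}^{*}$ in $\epsilon'$, and Lemma \ref{lmaSecIVA1}, applied with $f=\hat{u}_{i+1}^{*}$ and $(u_{1},u_{2})=(u^{+},u^{-})$, yields $\sum_{\epsilon'}p(\epsilon'|\epsilon_{i},u^{+})\hat{u}_{i+1}^{*}(\epsilon')\leq \sum_{\epsilon'}p(\epsilon'|\epsilon_{i},u^{-})\hat{u}_{i+1}^{*}(\epsilon')$. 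Under action $0$ no energy is drawn, so the future-state distribution is identical at $\bm{s}_{i}^{+}$ and $\bm{s}_{i}^{-}$.

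Chaining these pieces, the Bellman inequality that certifies optimality of action $1$ at $\bm{s}_{i}^{-}$ transfers to $\bm{s}_{i}^{+}$: the action-$1$ branch of the Bellman comparison weakly decreases when passing from $\bm{s}_{i}^{-}$ to $\bm{s}_{i}^{+}$, while the action-$0$ branch is unchanged, so action $1$ remains at least as good at $\bm{s}_{i}^{+}$. The main obstacle is really compressed into Lemma \ref{lmaSecIVA2}, which encodes the intuitive but non-obvious fact that a richer current battery cannot hurt the optimal cost-to-go; once that property and Lemma \ref{lmaSecIVA1} are in hand, the remainder is a careful Bellman manipulation. A subtlety worth flagging during write-up is the feasibility check: without it, the comparison might inadvertently pit action $1$ against action $0$ at a state where action $1$ is disallowed, but as noted above feasibility of action $1$ is automatically preserved as $\gamma_{H,i}$ grows.
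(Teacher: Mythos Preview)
Your proposal is correct and mirrors the paper's own proof almost exactly: fix $\epsilon_{i}$ and $\gamma_{G,i}$, show $d_{i}^{*}(\bm{s}_{i}^{-})=1\Rightarrow d_{i}^{*}(\bm{s}_{i}^{+})=1$, use that the action-$0$ branch (cost and transition) is unchanged in $\gamma_{H,i}$, and invoke Lemmas~\ref{lmaSecIVA1} and~\ref{lmaSecIVA2} to bound the action-$1$ future cost at $\bm{s}_{i}^{+}$ by that at $\bm{s}_{i}^{-}$. Your explicit feasibility check ($1\in\mathcal{A}_{\bm{s}_{i}^{-}}\Rightarrow 1\in\mathcal{A}_{\bm{s}_{i}^{+}}$) is a nice addition that the paper leaves implicit.
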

\begin{proof}
Suppose $\gamma_{H,i}^{+}\geq \gamma_{H,i}^{-}$, and denote $\bm{s}_{i}^{+}=\langle\epsilon_{i},\gamma_{G,i},\gamma_{H,i}^{+}\rangle$ and $\bm{s}_{i}^{-}=\langle\epsilon_{i},\gamma_{G,i},\gamma_{H,i}^{-}\rangle$. To prove Proposition 3, it suffices to show if $d_{i}^{*}\left(\bm{s}_{i}^{-}\right)=1$, then $d_{i}^{*}\left(\bm{s}_{i}^{+}\right)=1$. For $i=N$, the result is straightforward. For $i<N$, since $d_{i}^{*}\left(\bm{s}_{i}^{-}\right)=1$, according to (\ref{Bellman}),
\begin{equation}
\begin{split}
&c\left(\bm{s}_{i}^{-},1\right)+
K^{-2}\sum\limits_{\epsilon'}p\left(\epsilon'|\bm{s}_{i}^{-},1\right)\hat{u}^{*}_{i+1}\left(\epsilon'\right)\\
&\leq c\left(\bm{s}_{i}^{-},0\right)+
K^{-2}\sum\limits_{\epsilon'}p\left(\epsilon'|\bm{s}_{i}^{-},0\right)\hat{u}^{*}_{i+1}\left(\epsilon'\right) \\
& \overset{(a)}{=}
c\left(\bm{s}_{i}^{+},0\right)+
K^{-2}\sum\limits_{\epsilon'}p\left(\epsilon'|\bm{s}_{i}^{+},0\right)\hat{u}^{*}_{i+1}\left(\epsilon'\right).
\end{split}
\end{equation}
($a$) is because the cost parameter $c_{i}$ is irrelevant with $\gamma_{H,i}$ and there is no harvested energy consumption when $\alpha_{i}=0$. By combining Lemma \ref{lmaSecIVA1} and Lemma \ref{lmaSecIVA2}, we have
\begin{equation}
\sum_{\epsilon'}p\left(\epsilon'|\bm{s}_{i}^{+},1\right)\hat{u}_{i+1}^{*}\left(\epsilon'\right)
\leq \sum_{\epsilon'}p\left(\epsilon'|\bm{s}_{i}^{-},1\right)\hat{u}_{i+1}^{*}\left(\epsilon'\right),
\end{equation}
since EH-BS transmits with lower power when $\gamma_{H,i}$ is higher. Hence,
\begin{equation}
\begin{split}
&c\left(\bm{s}_{i}^{+},1\right)+
K^{-2}\sum\limits_{\epsilon'}p\left(\epsilon'|\bm{s}_{i}^{+},1\right)\hat{u}^{*}_{i+1}\left(\epsilon'\right)\\
&\leq c\left(\bm{s}_{i}^{+},0\right)+
K^{-2}\sum\limits_{\epsilon'}p\left(\epsilon'|\bm{s}_{i}^{+},0\right)\hat{u}^{*}_{i+1}\left(\epsilon'\right)
,
\end{split}
\end{equation}
i.e., $d_{i}^{*}\left(\bm{s}_{i}^{+}\right)=1$, which ends the proof.
\end{proof}

According to Proposition \ref{propSecIVA1} and Proposition \ref{propSecIVA2}, if $d_{i}^{*}\left(\langle \epsilon_{i},\gamma_{G,i},\gamma_{H,i}\rangle\right)=1$, then  $d_{i}^{*}\left(\langle \epsilon_{i},\gamma_{G,i}',\gamma_{H,i}'\rangle\right)$ $=1,\forall \gamma_{G,i}'\leq \gamma_{G,i},\gamma_{H,i}'\geq \gamma_{H,i}$. These propositions also indicate that, it would be more beneficial to assign the EH-BS to serve the user when the H-channel is in a good condition while the G-channel is poor. Interestingly, this conclusion coincides with the intuitions we found while developing the GA algorithm. That is, we can save more service cost with less harvested energy if the H-blocks are identified sequentially according to the metric function $\Xi\left(c_{i},p_{H,i}^{inv}\right)$, which uncovers the inherent characteristics of HES wireless networks.

In order to develop the algorithm for the optimal policy, we also need the following properties of the optimal cost-to-go function.
\begin{prop}
$\forall i\in\mathcal{N}$, $u^{*}_{i}\left(\bm{s}_{i}\right)$ obeys the following properties:
\begin{enumerate}
\item $u^{*}_{i}\left(\langle \epsilon_{i},\gamma_{G,i},\gamma_{H,i}\rangle\right)=u_{i}^{*}\left(\langle \epsilon_{i},\Gamma_{G,i}\left(\epsilon_{i},\gamma_{H,i}\right) ,\gamma_{H,i}\rangle\right)$, $\forall \gamma_{G,i}\leq \Gamma_{G,i}\left(\epsilon_{i},\gamma_{H,i}\right) $.

\item $u^{*}_{i}\left(\langle \epsilon_{i},\gamma_{G,i},\gamma_{H,i}\rangle\right)=u_{i}^{*}\left(\langle \epsilon_{i},\gamma_{G,i},\tilde{\Gamma}_{H,i}\left(\epsilon_{i},\gamma_{G,i}\right) \rangle\right)$, $\forall \gamma_{H,i}\leq \tilde{\Gamma}_{H,i}\left(\epsilon_{i},\gamma_{G,i}\right) $, where $\tilde{\Gamma}_{H,i}\left(\epsilon_{i},\gamma_{G,i}\right)$ $\triangleq \max\{H_{k}|\mathrm{s. t.} H_{k}< \Gamma_{H,i}\left(\epsilon_{i},\gamma_{G,i}\right)\}$.
\end{enumerate}
\label{propSecIVA3}
\end{prop}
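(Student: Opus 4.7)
The plan is to derive both parts directly from the Bellman recursion (\ref{Bellman}) by combining the monotone structure established in Proposition \ref{propSecIVA1} and Proposition \ref{propSecIVA2} with the explicit form of the cost function and the state-transition probability. The overarching observation is that, once the monotonicity of the optimal policy fixes $\alpha_{i}^{*}$ inside the relevant region of the state space, the remaining state arguments that influence $u_{i}^{*}(\bm{s}_{i})$ collapse to only those on which the immediate cost and the transition kernel actually depend.

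For part 1, I would first invoke Proposition \ref{propSecIVA1}: whenever $\gamma_{G,i}\leq \Gamma_{G,i}(\epsilon_{i},\gamma_{H,i})$, the optimal action is $d_{i}^{*}(\bm{s}_{i})=1$, so that $c(\bm{s}_{i},1)=(1-1)c_{i}=0$ regardless of $\gamma_{G,i}$. Next, I would note that when $\alpha_{i}=1$, the battery-state transition $\epsilon_{i+1}=Q(\epsilon_{i}-p_{H,i}^{inv}\tau+E_{H,i+1})$ depends only on $\epsilon_{i}$ and on $\gamma_{H,i}$ (through $p_{H,i}^{inv}$), not on $\gamma_{G,i}$; moreover, the next-block channel gains are i.i.d.\ and equi-probable, so $p(\bm{s}'|\bm{s}_{i},1)$ is independent of $\gamma_{G,i}$. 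Substituting into the Bellman equation, $u_{i}^{*}(\langle \epsilon_{i},\gamma_{G,i},\gamma_{H,i}\rangle)=\sum_{\bm{s}'}p(\bm{s}'|\epsilon_{i},\gamma_{H,i},1)\,u_{i+1}^{*}(\bm{s}')$, which is invariant in $\gamma_{G,i}$ over the whole range $[H_{1},\Gamma_{G,i}(\epsilon_{i},\gamma_{H,i})]$. In particular this common value equals $u_{i}^{*}(\langle \epsilon_{i},\Gamma_{G,i}(\epsilon_{i},\gamma_{H,i}),\gamma_{H,i}\rangle)$, which is precisely the claim. The terminal block $i=N$ is handled identically, since $u_{N}^{*}(\bm{s}_{N})=c(\bm{s}_{N},1)=0$ throughout the region.

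For part 2, the argument is dual. By Proposition \ref{propSecIVA2}, whenever $\gamma_{H,i}\leq \tilde{\Gamma}_{H,i}(\epsilon_{i},\gamma_{G,i})<\Gamma_{H,i}(\epsilon_{i},\gamma_{G,i})$ the optimal action is $d_{i}^{*}(\bm{s}_{i})=0$. I would then observe two facts: first, the immediate cost $c(\bm{s}_{i},0)=c_{i}$ is determined entirely by $\gamma_{G,i}$ via (\ref{costpara}) and does not involve $\gamma_{H,i}$; second, when $\alpha_{i}=0$ no harvested energy is drawn, so the battery transition $\epsilon_{i+1}=Q(\epsilon_{i}+E_{H,i+1})$ is independent of $\gamma_{H,i}$, and the next-block channel gains are again i.i.d. Hence $p(\bm{s}'|\bm{s}_{i},0)$ depends on $\bm{s}_{i}$ only through $\epsilon_{i}$, and the Bellman equation gives $u_{i}^{*}(\langle \epsilon_{i},\gamma_{G,i},\gamma_{H,i}\rangle)=c_{i}+\sum_{\bm{s}'}p(\bm{s}'|\epsilon_{i},0)\,u_{i+1}^{*}(\bm{s}')$, which is constant over $\gamma_{H,i}\in[H_{1},\tilde{\Gamma}_{H,i}(\epsilon_{i},\gamma_{G,i})]$ and therefore equal to its value at $\gamma_{H,i}=\tilde{\Gamma}_{H,i}(\epsilon_{i},\gamma_{G,i})$.

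The main subtlety to be careful about is not a deep analytical obstacle but a bookkeeping one: I must ensure that $\tilde{\Gamma}_{H,i}$ is defined to be the largest quantized channel state strictly below the threshold $\Gamma_{H,i}$ so that the action is still $0$ at $\tilde{\Gamma}_{H,i}$, and that both claims are read strictly within the respective monotone-action regions established by Propositions \ref{propSecIVA1} and \ref{propSecIVA2}. Once the action is pinned down on these regions, the result reduces to reading off which state variables appear in $c(\bm{s}_{i},\alpha_{i}^{*})$ and $p(\bm{s}'|\bm{s}_{i},\alpha_{i}^{*})$, and noting that the remaining arguments drop out.
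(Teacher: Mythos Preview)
Your proposal is correct and follows exactly the approach the paper indicates: the paper's own proof simply states that the result ``can be obtained with the monotone structures derived in Proposition \ref{propSecIVA1} and Proposition \ref{propSecIVA2}'' and omits all details, whereas you have supplied those details by pinning down the optimal action on each region via the monotonicity propositions and then observing that the remaining state coordinate drops out of both the immediate cost and the transition kernel. There is nothing to add or correct.
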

\begin{proof}
The proof can be obtained with the monotone structures derived in Proposition \ref{propSecIVA1} and Proposition \ref{propSecIVA2}.
Details are omitted for brevity.\end{proof}

By exploiting the monotone structures, we propose a monotone backward induction algorithm (MBIA) with the main idea illustrated via an example in Fig. \ref{Monotone Illustration}. Given the block index $i$ and the battery energy state $\epsilon_{i}$, assume that we traverse the $K^{2}$ channel states $\langle\gamma_{G,i},\gamma_{H,i}\rangle$ with an order as $\langle H_{5}, H_{5}\rangle$, $\langle H_{4}, H_{5}\rangle$,\ldots,$\langle H_{1},H_{5}\rangle$, $\langle H_{5}, H_{4}\rangle$, $\langle H_{4}, H_{4}\rangle$,\ldots,$\langle H_{2}, H_{1}\rangle$, $\langle H_{1}, H_{1}\rangle$. The BIA evaluates the cost-to-go function over all 25 channel states as shown in Fig. \ref{BIA}. In the MBIA, as shown in Fig. \ref{MBIA} after evaluating the cost-to-go function in state $\langle H_{5} , H_{5}\rangle$, in which the optimal decision is $\alpha_{i}^{*}=1$,  there is no need to go through the states with a lower G-channel gain, i.e., the next evaluation will be in state $\langle H_{5} , H_{4}\rangle$. Since $\alpha^{*}_{i}=0$ in state $\langle H_{5} , H_{4}\rangle$, i.e., $\Gamma_{H,i} \left(\epsilon_{i},H_{5}\right)=H_{5}> H_{4}$, there is no need to go through the states with a lower H-channel gain. Thus, after the evaluation at $\langle H_{4} , H_{4}\rangle$, the MBIA will go to $\langle H_{4} , H_{3}\rangle $ directly. Such procedures carry on until the optimal actions in all states are determined, and eventually, only 9 evaluations of the cost-to-go function are needed.

Details of the MBIA are given in Algorithm \ref{MBIAalgorithm}, where the major differences compared to the conventional BIA lie on the procedure of checking the threshold states, i.e., from Line 9 to 16. Based on Propositions \ref{propSecIVA1}, \ref{propSecIVA2} and \ref{propSecIVA3}, once the threshold states are found, the optimal action for the states with a lower G-channel (H-channel)  gain will be determined as $1$ ($0$). Meanwhile, the  value of the optimal cost-to-go function $u^{*}_{i}\left(\bm{s}_{i}\right)$ in these remaining states will be assigned the same as that of their corresponding threshold states. Once $\bm{\pi}^{*}$ is obtained, it acts as a look-up table. During data transmission, based on the system state, the operation can be determined immediately by referring to the associated entry in such a table.

\begin{figure}[H]
\centering
\subfigure[BIA]{
\label{BIA}
\includegraphics[width=0.23\textwidth]{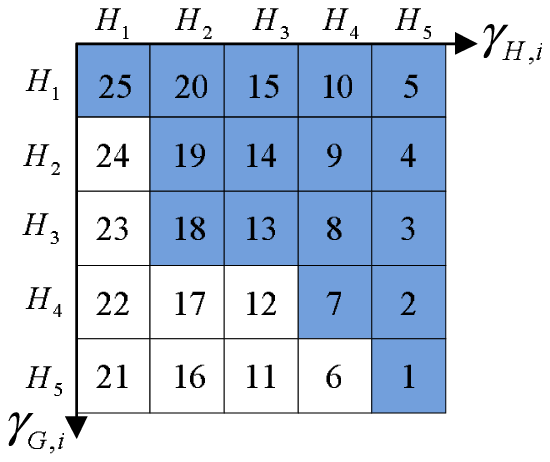}}
\subfigure[MBIA]{
\label{MBIA}
\includegraphics[width=0.23\textwidth]{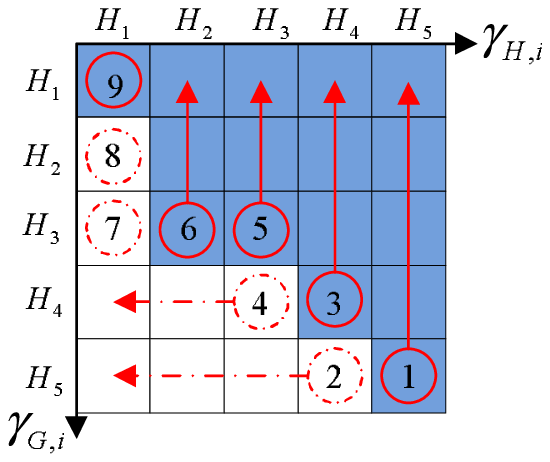}}
\caption{Illustration of the monotone structures, $K=5$ (The colored states are with $d_{i}^{*}\left(\bm{s}_{i}\right)=1$. The number in each state is the traverse order. The states with solid-line (dot dash-line) circles are the threshold states with $d_{i}^{*}\left(\bm{s}_{i}\right)=1$ ($d_{i}^{*}\left(\bm{s}_{i}\right)=0$)).}
\label{Monotone Illustration}
\end{figure}

\begin{rmk}
Although we have discovered the monotone structures of  $\bm{\pi}^{*}$ over the channel gains, unfortunately, such structures do not hold over the battery energy level $\epsilon_{i}$. In other words, as $M$ increases, the complexity of the MBIA may still be high. Consequently, it is important to choose a suitable resolution for the battery energy state to balance the computational complexity and the achievable performance.
\end{rmk}
\begin{algorithm}[h]
\caption{Monotone Backward Induction Algorithm.}
\begin{algorithmic}[1]
\REQUIRE
{$B_{m}$, $w_{G}$, $w_{D}$, $g_{0}$, $\theta$, $\sigma^{2}$, $d_{H}$, $d_{G}$, $\{H_{k}\}$, $p_{G}^{\max}$, $p_{H}^{\max}$, $\tau$,  $M$, $K$} and $N$
\ENSURE
{$\bm{\pi}_{N\times M\times K \times K}$}
\STATE Initialize $\mathbf{u}^{*}=+\infty \times \bm{1}_{N\times M\times K \times K}, \bm{\pi}=\mathbf{0}_{N\times M\times K \times K}$, $t=N$;
\STATE \textbf{While} {$t\geq 1$} \textbf{do}
\STATE \hspace{10pt} \textbf{For} {$m=1$ to $M$} \textbf{do}
\STATE \hspace{20pt} $k_{1}=K,k_{2}=K$;
\STATE \hspace{20pt} \textbf{While} {$k_{1}\geq 1 $ and $k_{2}\geq 1 $} \textbf{do}
\STATE \hspace{30pt} $\bm{s}=\langle \left(2m-1\right)B_{m}/2M,H_{k_{2}},H_{k_{1}}\rangle$;
\STATE \hspace{30pt} {$\mathbf{u}^{*}_{t:m:k_{2},k_{1}}=\min\nolimits_{\alpha}\big\{c\left(\bm{s},\alpha\right)+\sum\nolimits_{m',k_{1}',k_{2}'}p\left(\bm{s}' |\bm{s},\alpha\right)\mathbf{u}^{*}_{t+1:m':k_{2}':k_{1}'}\cdot \bm{1}\{t\neq N\}\big\}$;}
\STATE  \hspace{30pt} {$\alpha_{\mathrm{opt}}=\arg\min\nolimits_{\alpha}\big\{c\left(\bm{s},\alpha\right)+\sum\nolimits_{m',k_{1}',k_{2}'}p\left(\bm{s}' |\bm{s},\alpha\right)\mathbf{u}^{*}_{t+1:m':k_{2}':k_{1}'}\cdot \bm{1}\{t\neq N\}\big\}$;}
\STATE \hspace{30pt} \textbf{If} {$\alpha_{\mathrm{opt}}==1$} \textbf{then}
\STATE \hspace{40pt} {$\mathbf{u}^{*}_{t:m:k:k_{1}}=\mathbf{u}^{*}_{t:m:k_{2}:k_{1}},\forall k<k_{2}$;}
\STATE \hspace{40pt} {$\bm{\pi}_{t:m:k:k_{1}}=1,\forall k\leq k_{2}$;}
\STATE \hspace{40pt} {$k_{1}=k_{1}-1$;}
\STATE \hspace{30pt} \textbf{Else}
\STATE \hspace{40pt} {$\mathbf{u}^{*}_{t:m:k_{2}:k}=\mathbf{u}^{*}_{t:m:k_{2}:k_{1}},\forall k<k_{1}$;}
\STATE \hspace{40pt} {$k_{2}=k_{2}-1$;}
\STATE \hspace{30pt} \textbf{End if}
\STATE \hspace{20pt} \textbf{End while}
\STATE \hspace{10pt} \textbf{End for}
\STATE \hspace{10pt} {$t=t-1$;}
\STATE \textbf{End while}
\end{algorithmic}
\label{MBIAalgorithm}
\end{algorithm}

\subsection{Heuristic Online Policies}
Though the MBIA algorithm accelerates the conventional BIA, as the size of the state space grows, its execution time becomes unacceptable. Moreover, the memory requirement increases in order to store $\bm{\pi}^{*}$. To further reduce complexity, in this subsection, we propose a Look-Ahead policy and a Threshold-based Heuristic policy. First, we introduce a Greedy-Transmit policy as the performance benchmark.

1) \emph{Greedy-Transmit Policy:} The Greedy-Transmit policy always takes a higher priority to use the harvested energy. In each block, if the available harvested energy can afford to transmit the packet and the peak power constraint is not violated, the EH-BS will be assigned to serve the user, i.e., $
\alpha_{i}^{GT}=\bm{1}\{p_{H,i}^{inv}\leq \min\{\epsilon_{i}\slash \tau,p_{H}^{\max}\}\}, \forall i\in\mathcal{N}$.

2) \emph{Look-Ahead Policy:} The Look-Ahead policy is a simplified version of the discrete MDP approach, the idea of which has been previously adopted for online algorithms designs \cite{YMao14,IAhmed1403}. In the MDP approach, in the $i$-th block, we need to consider the randomness of the future $N-i$ blocks. However, in the Look-Ahead policy, we only consider
the randomness in the next block, which lowers the complexity at the expense of performance degradation. To obtain the Look-Ahead policy, we can simply plug in $N=2$ in the MBIA. During data transmission, if $i<N$, according to the system state, the action $\alpha_{i}^{LA}$ is determined by the corresponding entry in $\bm{\pi}^{LA}_{1,:,:,:}$. If $i=N$,  $\alpha_{i}^{LA}=\bm{1}\{p_{H,i}^{inv}\leq \min\{\epsilon_{i}\slash \tau,p_{H}^{\max}\}\}$.

3) \emph{Threshold-based Heuristic Policy:} The Greedy-Transmit policy  always tends to minimize the service cost in the current block, without utilizing the G-channel SI. On the other hand, the performance of the Look-Ahead policy is closely related to the state quantization levels, i.e., the computational complexity. Intuitions suggest that the EH-BS prefers to serve the user when the H-channel is in a good condition while the G-channel is in deep fading. Also, a higher battery energy state may further stimulate the EH-BS to transmit more packets. Thus, in the Threshold-based Heuristic policy, for $i<N$, if $p_{H,i}^{inv}\leq \min\{\epsilon_{i}\slash \tau,p_{H}^{\max}\}$, $\alpha_{i}^{TH}$ is determined by
\begin{equation}
\alpha^{TH}_{i}=
\bm{1}\big\{\epsilon_{i}\Xi\left(c_{i},p_{H,i}^{inv}\right)\geq \zeta \mathrm{P_{avg}}\tau \Xi\left(\lambda_{1},\lambda_{2}\right)\big\}.
\label{threshold}
\end{equation}
Otherwise, $\alpha_{i}^{TH}=0$.  For $i=N$, $\alpha_{i}^{TH}=\bm{1}\{p_{H,i}^{inv}\leq \min\{\epsilon_{i}\slash \tau,p_{H}^{\max}\}\}$.
In (\ref{threshold}), $\zeta$ is a scaling factor that can be obtained and optimized in prior, and $\lambda_{1}$, $\lambda_{2}$ are constants given by $\lambda_{1}=\mathbb{E}\{c_{i}\}$ and
$\lambda_{2}=\mathbb{E}\{p_{H,i}^{inv}|p_{H,i}^{inv}\leq p_{H}^{\max}\}$, respectively.
For Rayleigh fading channels, they can be simplified as
\begin{equation}
\lambda_{1}=w_{D}\left(1-e^{-A_{G}\slash\kappa}\right)+w_{G}\tau A_{G}\mathrm{E}_{1}\left(A_{G}\slash\kappa\right),
\end{equation}
\begin{equation}
\lambda_{2}=A_{H}\mathrm{E}_{1}\left(A_{H}\slash p_{H}^{\max}\right)\cdot e^{A_{H}\slash p_{H}^{\max}},
\label{lambda2}
\end{equation}
where $A_{j}=\mu_{j} \left(2^{\frac{R}{W\tau}}-1\right)\sigma^{2}g_{0}^{-1}d_{j}^{\theta}$, $\mu_{j}=\mathbb{E}\left[\gamma_{j,i}\right],j\in\{G,H\}$, and $\mathrm{E}_{1}\left(x\right)=\int_{x}^{+\infty}\frac{e^{-t}}{t}\cdot dt$. Indeed, this policy obeys Proposition \ref{propSecIVA1} and Proposition \ref{propSecIVA2}, i.e., it is a threshold-based policy.

\subsection{Possible Extensions}
In this subsection, possible extensions of the obtained results will be discussed. Although we focus on a single-user HES network in this paper, our results provide valuable guidelines for designing more general networks. Specifically, for HES networks where the EH-BS and the GP-BS coordinate to serve multiple users with OFDMA\footnote{Each user is assumed to be allocated with one sub-carrier with equal bandwidth. The system performance can be further optimized by adopting bandwidth allocation techniques \cite{DNg13,ZWangPartI,ZWangPartII}, which is beyond our scope.}, the GA algorithm can be directly applied for the non-causal SI scenario with complexity $\mathcal{O}\left(NK\left(N+K\right)\right)$, where $K$ is the number of users. {Since there is a transmit power constraint for all the users at each BS, the feasibility of the sum transmit power should be checked in addition to the energy causality constraint in each loop of the GA algorithm.} For the causal SI scenario, the MDP-based algorithms are computationally intractable due to high complexity. However, monotone structures of the optimal online solution still hold. Thus, the low-complexity Threshold-based Heuristic policy developed in Section IV-B can be modified for multiple users. Specifically, in the $i$-th block, $\alpha_{i}^{TH}$ (the user index is omitted for brevity) can be determined according to (\ref{threshold}) for each user, and the EH-BS serves the users with the largest values of $\Xi\left(c_{i},p_{H,i}^{inv}\right)$ and $\alpha_{i}^{TH}=1$ given the available energy and transmit power constraints. Sample simulation results for a two-user HES network will be provided in the Section V. For HES networks with multiple EH-BSs, the idea of central energy queue can be applied to form an alliance of the EH-BSs \cite{Thuc1412}. By adopting the proposed algorithms, the assigned BS for each user, i.e., either the GP-BS or one of the EH-BSs, can be determined.

{It is worthwhile to note that, as the network size grows, the amount of SI and the number of decision variables grow accordingly, and the coupling among different BSs/users in resource allocation becomes more severe, both of which make the optimization highly difficult. As a result, the extensions of the algorithms dedicated for single-user networks may not offer a complete solution, but they can still serve as a benchmark for future investigations.}

\section{Simulation Results}
In this section, we evaluate the performance of the proposed algorithms and show the unique grid energy consumption and QoS tradeoff in HES wireless networks. In the simulations, we set $w_{G}=1$, $R=50$ Kbits, $N=50$, $\tau = 1$ ms, $\sigma^{2} = -97.5$ dBm, $W = 10$ MHz,  $g_{0}=-40$ dB, $\theta = 4$, $p_{H}^{\max}=0.5$ W and $p_{G}^{\max}=2$ W. Without loss of generality, we assume $f_{E_{H}}\left(e\right)=1/E_{m},e\in\left[0,E_{m}\right]$ and $\gamma_{G,i},\gamma_{H,i}$ are exponentially distributed with mean $0$ dB. In the GA algorithm, $\Xi\left(p,q\right) = p/q$ is adopted. We use $K = 25$, $B_{m}=NE_{m}$ in the MBIA. The Greedy-Transmit policy is used as a baseline for the online setting, while exhaustive search is used to find the optimal solution of the off-line case for comparison.

\subsection{Total Service Cost Minimization}

\begin{figure}[h]
\begin{center}
    \label{TSCPavg}
   \includegraphics[width=0.48\textwidth]{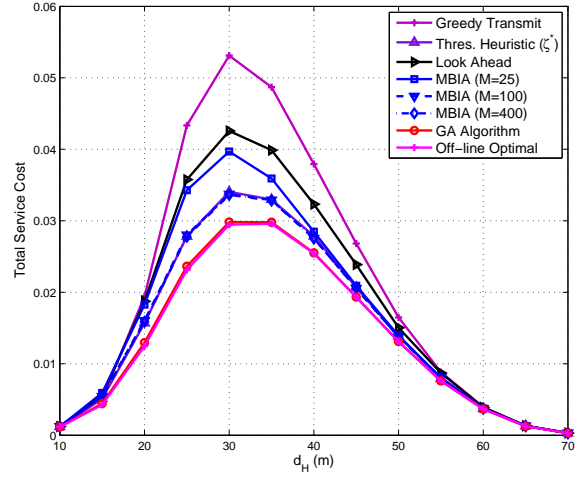}
\end{center}
\vspace{-15pt}
\caption{Total service cost vs. $d_{H}$, $w_{D}=0.01$, $\mathrm{P_{avg}}=20$ mW and $d_{H}+d_{G}=80$ m.}
\label{costdH}
\end{figure}

\begin{figure}[h]
\begin{center}
    \label{TSCPavg}
   \includegraphics[width=0.48\textwidth]{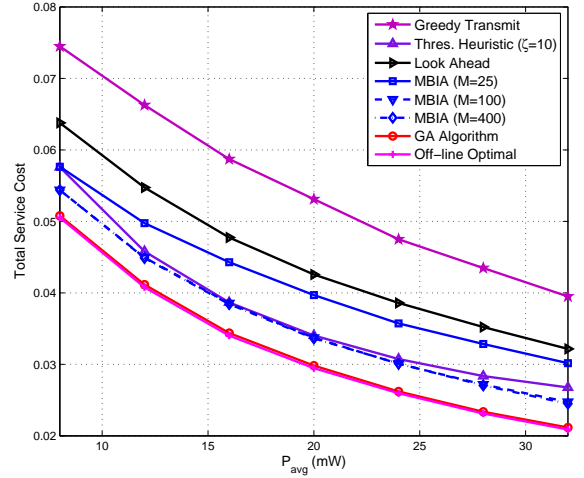}
\end{center}
\vspace{-15pt}
\caption{Total service cost vs. EH power, $w_{D}=0.01$, $d_{H}=30$ m, $d_{G}=50$ m.}
\label{costEHpwr}
\end{figure}

We start with the investigation of the impact of the distance from the EH-BS to the user, denoted as $d_{H}$, on the total service cost in Fig. \ref{costdH} by assuming that the user is located on the line between the two BSs. In the Threshold-based Heuristic policy, $\zeta^{*}$ is obtained via simulation within the candidate set $\left[0:0.5:200\right]$. We see that, when the user is near the EH-BS, the total service cost approaches zero, since with low path loss the EH-BS can deliver most of the data packets without incurring any cost. As the user moves closer to the GP-BS, we see that the total service cost decreases, which is due to the small path loss of the G-channel, and thus the GP-BS is able to transmit the packets with little grid energy consumption. In Fig. \ref{costEHpwr}, we fix $d_{H}=30$ m, $d_{G}=50$ m and show the total service cost versus the average EH power of different policies. In accordance with the intuition, the total service cost decreases as $\mathrm{P_{avg}}$ increases. As for performance comparison between different schemes, for the off-line case, from both Fig. \ref{costdH} and Fig. \ref{costEHpwr}, we see that the proposed GA algorithm enjoys near-optimal performance. For the online setting, the MBIA with $M = 25,100,400$ are evaluated\footnote{The number of possible states in the MDP is $1.25\times 10^{7}$ when $M = 400$. Thanks to MBIA, we can obtain the optimal policy within few hours.}, and the total service cost of $M = 100$ is significantly lower than that of $M = 25$. On the other hand, the performance gain from $M{=}100$ to $400$ is negligible, which indicates that $M=100$ is a viable choice. Moreover, all the proposed policies achieve a noticeable improvement compared to the benchmark  policy. Surprisingly, with a proper choice of $\zeta$, the proposed Threshold-based Heuristic policy not only outperforms the Look-Ahead policy and the MBIA with $M=25$, but also performs close to the MBIA with $M=100$ and $M=400$.
\vspace{-10pt}
\begin{figure}[h]
\begin{center}
    \label{TSCPavg}
   \includegraphics[width=0.48\textwidth]{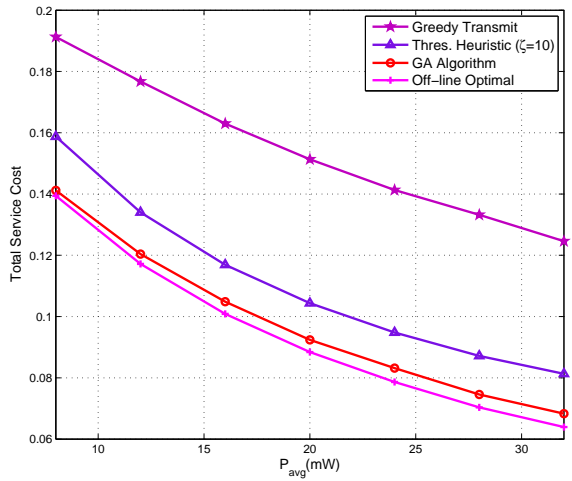}
\end{center}
\vspace{-15pt}
\caption{Total service cost vs. EH power (two users), $w_{D}=0.01$, both users are $30$ m from the EH-BS and $d_{G}=50$ m from the GP-BS.}
\label{costEHpwr2User}
\end{figure}

In Fig. \ref{costEHpwr2User}, we show the total service cost performance in a two-user HES network, where both users are $d_{H}=30$ m from the EH-BS and $d_{G}=50$ m from the GP-BS. It can be seen that the GA algorithm still achieves competitive performance compared to the off-line optimal algorithm. Also, the low-complexity Threshold-based Heuristic policy substantially reduces the total service cost compared to the benchmark policy, and its performance is comparable to those achieved by the off-line algorithms. These observations are similar with those from Fig. \ref{costEHpwr}, which indicates the potential of extending our findings to more general networks.

\subsection{Grid Energy Consumption and QoS Tradeoff}
\begin{figure}[h]
\begin{center}
    \label{TSCPavg}
   \includegraphics[width=0.48\textwidth]{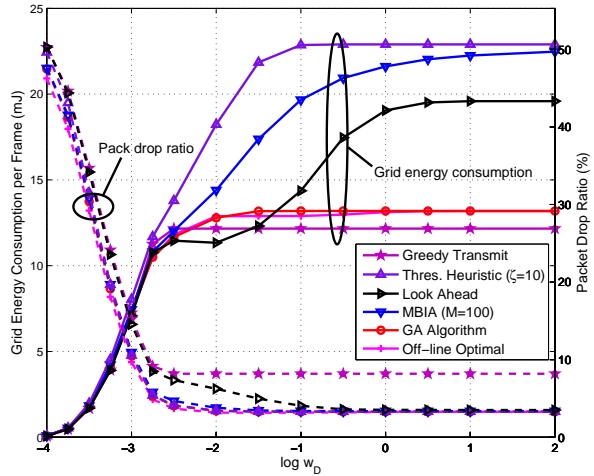}
\end{center}
\vspace{-15pt}
\caption{Grid energy consumption \& packet drop ratio vs. $w_{D}$, $d_{H}=30$ m, $d_{G}=50$ m, $\mathrm{P_{avg}}=20$ mW.}
\label{GridPacket}
\end{figure}
In this subsection, we will demonstrate the grid energy consumption and QoS tradeoff in HES wireless networks. We show the grid energy consumption and packet drop ratio, i.e., the percentage of the dropped packets, under different policies in Fig. \ref{GridPacket}. We see that, by adjusting $w_{D}$, different grid energy consumption and QoS tradeoffs are achieved. In particular, when $w_{D}$ is small, the QoS requirement is loose, so the network tends to save grid energy by dropping more packets. However, as $w_{D}$ gets larger, the grid energy consumption increases and the packet drop ratio decreases, which indicates that the network addresses more on QoS. Hence, given the QoS requirement, $w_{D}$ can be determined to minimize the grid energy consumption. For instance, assuming 96\% successful packet transmission is required, we may choose the Look-Ahead policy with $w_{D}=10^{-0.5}$ and the grid energy consumption will be around 17.5 mJ. We can also adopt the Threshold-based Heuristic policy with $w_{D}=10^{-2}$, which will enjoy a lower complexity, but consume slightly more grid energy, i.e., 18.2 mJ. In addition, the off-line policies are more competent to suppress both the grid energy consumption and packet drop ratio compared to the online policies. This is because of the availability of full SI that helps to identify the optimal H-blocks. Interestingly, when $w_{D}$ is relatively large, e.g. $>10^{-2}$, the grid energy consumed by the proposed online policies is significantly greater than that by the benchmark. The reason is that, with causal SI, the optimal policy should avoid the event of packet drop. Thus, in order to retain more harvested energy for the blocks with poor G-channel condition, the GP-BS will transmit more packets when the H-channel is in deep fading, which leads to higher grid energy consumption.
\vspace{-10pt}
\begin{figure}[h]
\begin{center}
    \label{TSCPavg}
   \includegraphics[width=0.48\textwidth]{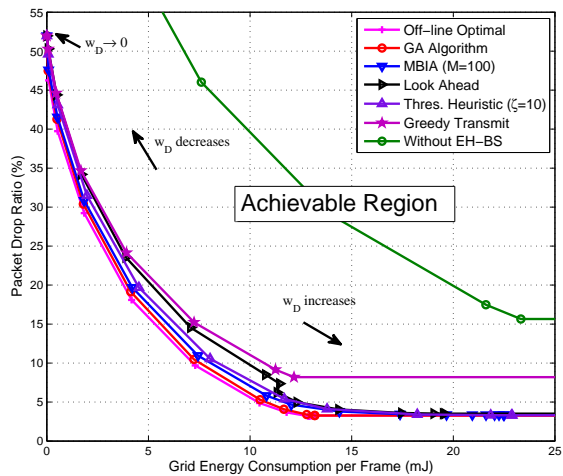}
\end{center}
\vspace{-15pt}
\caption{Packet drop ratio and grid energy consumption tradeoff regions.}
\label{tradeoffRegion}
\end{figure}

In Fig. \ref{tradeoffRegion}, we plot the grid energy consumption-QoS tradeoff region by adopting different values of $w_D$. The algorithm with a larger tradeoff region means it is more powerful to approach the Pareto-optimal performance achieved by the off-line optimal solution, i.e., the grid energy consumption (packet drop ratio) can not be further reduced without deteriorating the QoS (grid energy consumption) performance. We see the tradeoff region achieved by deploying a single GP-BS\footnote{In the case without an EH-BS, the Greedy Transmit policy is both off-line and online optimal.} is much smaller than those achieved by adding an EH-BS, i.e., with a given packet drop ratio requirement, the grid energy consumption is much higher and vice versa, which confirms the benefits of HES networks. Due to the limited allowable transmit power, there will always be a portion of dropped packets, e.g. when the EH-BS does not have enough energy while the G-channel is in deep fading, the packet has to be dropped. Consequently, we may observe non-zero floors of the packet drop ratio both in Fig. \ref{GridPacket} and Fig. \ref{tradeoffRegion}, which vary among different policies. For the Greedy Transmit policy, the packet drop ratio quickly saturates at 8.19\% when $w_{D}=10^{-1.5}$, which is due to its short-sighted nature, i.e., it always optimizes the current service cost. While for the Look-Ahead policy, the MBIA with $M=100$ and the Threshold-based Heuristic policy, the packet drop ratio keeps decreasing as $w_{D}$ increases and eventually reaches 3.51\%, 3.36\% and 3.32\%, respectively. This indicates that, in order to achieve a better QoS in HES networks, a certain amount of grid energy should be sacrificed, and at the same time, energy management schemes that fully exploit the available SI and properly balance the current/future system performance are needed.

\section{Conclusions}
In this paper, we investigated the grid energy consumption and QoS tradeoff in HES wireless networks. The total service cost addressing both the grid energy consumption and QoS was introduced as the performance metric, and BS assignment and power control policies were developed to optimize the system. It was revealed that the proposed total service cost minimization approach had the capability to achieve different grid energy consumption and QoS tradeoffs. Moreover, to maintain a better QoS in HES systems, apart from a higher grid energy consumption, the utilization of the harvested energy should consider both the current and future system performance based on the available side information. To exploit the full potential of HES networks in achieving green communications, further investigation will be needed. {It will be interesting to extend the BS assignment and power control policies to more general HES networks (e.g. both BSs are hybrid energy supplied).} Anther extension is to consider HES networks with wireless backhaul links, which will make it more attractive in practice. Other design problems should also be investigated, such as interference management, user scheduling, and channel estimation.

\begin{appendix}
\subsection{Proof for Lemma \ref{lmaSecIII1}}
First, it is straightforward to show that, if a packet is delivered by one of the BSs, without loss of optimality, (\ref{QoS}) can be achieved with equality, i.e., the assigned BS can transmit with the channel inversion power, which is the minimum power ensuring successful data packet transmission. Define $\alpha_{i}\triangleq I_{H,i}$, which indicates whether to deliver the packet with the EH-BS in the $i$-th block. Thus, the energy causality constraint and the peak power constraint for the EH-BS can be equivalently expressed as (\ref{EH causal constraint3}) and (\ref{H PPC2}), respectively. Moreover, given $\alpha_{i}=0$, when $p_{G,i}^{inv}\leq p_{G}^{\max}$ and the weighted cost of grid energy, $w_{G}p_{G,i}^{inv}\tau$, is less than the weighted cost of dropping a packet, $w_{D}$, the system should choose to transmit the packet, which incurs service cost $w_{G}p_{G,i}^{inv}\tau$. Otherwise, it means that the G-channel is experiencing deep fading, and thus the packet should be dropped, which incurs service cost $w_{D}$. Hence, the total service cost can be rewritten as $\sum_{i=1}^{N}\left(1-\alpha_{i}\right)c_{i}$, where $c_{i}$ is given by (\ref{costpara}). Besides, by contradiction, we can see that if $\{\alpha_{i}\}$ is optimal to $\hat{\mathcal{P}}_{1}$, (\ref{soluP3}) is optimal to $\mathcal{P}_{1}$.

\subsection{Proof for Property 3 of the GA algorithm}
For brevity, we will only prove the case when the H-channel is constant, i.e., $p_{H,i}^{inv}=p_{H}^{inv},\forall i \in\mathcal{N}$\footnote{We assume $p_{H}^{inv}\leq p_{H}^{\max}$, otherwise, there will be only one feasible solution $\alpha_{i}=0,\forall i\in\mathcal{N}$.}. The case when the G-channel is constant can be proved by using similar techniques. For convenience, we denote $\bm{\alpha}$ and $\tilde{\bm{\alpha}}$ as solutions obtained by the GA algorithm and any other feasible solution, respectively. Denote $|\bm{\alpha}|$ as the number of H-blocks in solution $\bm{\alpha}$, $m_{k}$ as the $k$-th ($k\leq |\alpha|$) block being selected as an H-block in the GA algorithm. $\left[k\right]$ ($k\leq |\tilde{\bm{\alpha}}|$) is the H-block index in solution $\tilde{\bm{\alpha}}$ with the $k$-th largest value of $\Xi\left[c_{i},p_{H,i}^{inv}\right]$. Further, $\mathcal{N}_{\bm{\alpha}}\triangleq\{i|\alpha_{i}=1\}$. We will show the total service cost $c_{\Sigma}^{\tilde{\bm{\alpha}}}$ achieved by $\forall \tilde{\bm{\alpha}}\neq \bm{\alpha}$ will be no smaller than $c_{\Sigma}^{\mathrm{GA}}$.

Owing to the greedy manner of the GA algorithm, $c_{m_{1}}\geq c_{m_{2}}\cdots \geq c_{m_{|\bm{\alpha}|}}$. According to the definition of $\left[k\right]$, we have $c_{\left[1\right]}\geq c_{\left[2\right]} \cdots \geq c_{\left[|\tilde{\bm{\alpha}}|\right]}$. Define
\begin{equation}
\begin{split}
&i_{0}=\mathrm{dist}\left(\bm{\alpha},\tilde{\bm{\alpha}}\right)\triangleq\\
&\begin{cases}
\min\{|\bm{\alpha}|,|\tilde{\bm{\alpha}}|\}+1, & m_{i}=\left[i\right], \forall i=1,\cdots \min\{|\bm{\alpha}|,|\tilde{\bm{\alpha}}|\}\\
\min\{i|\mathrm{s.t.}\ m_{i}\neq \left[i\right]\}, &{\rm{otherwise}}.
\end{cases}
\end{split}
\end{equation}
If $\mathrm{dist}\left(\bm{\alpha},\tilde{\bm{\alpha}}\right)=\min\{|\bm{\alpha}|,|\tilde{\bm{\alpha}}|\}+1$, because of the greedy manner, we have $|\bm{\alpha}|\geq |\tilde{\bm{\alpha}}|$, which leads to $c_{\Sigma}^{\tilde{\bm{\alpha}}}\geq c_{\Sigma}^{\mathrm{GA}}$. Now, we consider the case that $\mathrm{dist}\left(\bm{\alpha},\tilde{\bm{\alpha}}\right)\leq \min\{|\bm{\alpha}|,|\tilde{\bm{\alpha}}|\}$. Denote
\begin{equation}
\begin{split}
&\mathcal{N}_{\tilde{\bm{\alpha}}}^{L,i_{0}}\triangleq\{i|i<m_{i_{0}},\tilde{\alpha}_{i}=1,\forall k<i_{0},i\neq \left[k\right]\},\\
&\mathcal{N}_{\tilde{\bm{\alpha}}}^{R,i_{0}}\triangleq\{i|i>m_{i_{0}},\tilde{\alpha}_{i}=1,\forall k<i_{0},i\neq \left[k\right]\}.
\end{split}
\end{equation}
Thus, $\left[i_{0}\right]\in \mathcal{N}_{\tilde{\bm{\alpha}}}^{L,i_{0}}\cup\mathcal{N}_{\tilde{\bm{\alpha}}}^{R,i_{0}}$ and $c_{m_{i_{0}}}\geq c_{i}, \forall i\in \mathcal{N}_{\tilde{\bm{\alpha}}}^{L,i_{0}}\cup\mathcal{N}_{\tilde{\bm{\alpha}}}^{R,i_{0}}$.

If $\mathcal{N}_{\tilde{\bm{\alpha}}}^{L,i_{0}}\neq\emptyset$, by choosing an arbitrary $\overline{i}_{0}\in \mathcal{N}_{\tilde{\bm{\alpha}}}^{L,i_{0}}$, we can construct a new solution
\begin{equation}
\{\tilde{\alpha}^{1}_{i}\}=
\begin{cases}
1, &i=m_{i_{0}}\\
0, &i=\overline{i}_{0}\\
\tilde{\alpha}_{i}, &{\rm{otherwise}}
\end{cases}
\label{newfeassolution}
\end{equation}
with $\mathrm{dist}\left(\bm{\alpha},\tilde{\bm{\alpha}}^{1}\right)=\mathrm{dist}\left(\bm{\alpha},
\tilde{\bm{\alpha}}\right)+1$ and $c_{\Sigma}^{\tilde{\bm{\alpha}}^{1}}\leq c_{\Sigma}^{\tilde{\bm{\alpha}}}$. If $\mathcal{N}_{\tilde{\bm{\alpha}}}^{L,i_{0}}=\emptyset$, but $\mathcal{N}_{\tilde{\bm{\alpha}}}^{R,i_{0}}\neq \emptyset$, we can choose $\overline{i}_{0}=\min\{i|i\in \mathcal{N}_{\tilde{\bm{\alpha}}}^{R,i_{0}}\}$ and construct a new feasible $\tilde{\bm{\alpha}}^{1}$ according to (\ref{newfeassolution}) (We will show its feasibility at the end  of this proof), with $\mathrm{dist}\left(\bm{\alpha},\tilde{\bm{\alpha}}^{1}\right)=\mathrm{dist}\left(\bm{\alpha},
\tilde{\bm{\alpha}}\right)+1$ and $c_{\Sigma}^{\tilde{\bm{\alpha}}^{1}}\leq c_{\Sigma}^{\tilde{\bm{\alpha}}}$.

By repeating the above steps for at most $\min\{|\bm{\alpha}|,|\tilde{\bm{\alpha}}|\}+1-i_{0}$ times, we will be able to conclude $c_{\Sigma}^{\tilde{\bm{\alpha}}}\geq c_{\Sigma}^{\tilde{\bm{\alpha}}^{1}}\geq \cdots \geq c_{\Sigma}^{\mathrm{GA}}$. Thus, $\bm{\alpha}$ is optimal.

It remains to show the feasibility as mentioned above. Since $\tilde{\bm{\alpha}}$ is feasible, the energy causality constraints are satisfied, i.e.,
\begin{equation}
\begin{split}
\sum_{i=1}^{k}\tilde{\alpha}_{i}p_{H}^{inv}\tau&=\sum_{i=1}^{k} p_{H}^{inv}\tau \cdot \bm{1}\{i\in \mathcal{M}_{i_{0}-1}\cup \mathcal{N}_{\tilde{\bm{\alpha}}}^{R,i_{0}}\}\\
&\leq \sum_{l=1}^{k}E_{H,l},\forall k\in\mathcal{N},
\label{feasibility1}
\end{split}
\end{equation}
where $\mathcal{M}_{i_{0}-1}\triangleq\{\left[1\right],\cdots,\left[i_{0}-1\right]\}=\{m_{1},\cdots,m_{i_{0}-1}\}$. Since $\mathcal{M}_{i_{0}}\subseteq \mathcal{N}_{\bm{\alpha}}$, thus, $\hat{\bm{\alpha}}\triangleq \{\hat{\alpha}_{i}=1,\forall i\in\mathcal{M}_{i_{0}}\}$ is feasible, i.e.,
\begin{equation}
\sum_{i=1}^{k}\hat{\alpha}_{i}p_{H}^{inv}\tau=\sum_{i=1}^{k}p_{H}^{inv}\tau \cdot \bm{1}\{i\in\mathcal{M}_{i_{0}}\}\leq\sum_{l=1}^{k}E_{H,l},\forall k\in\mathcal{N}.
\label{feasibility2}
\end{equation}
Suppose that $\tilde{\bm{\alpha}}^{1}$ is infeasible, i.e., $\exists k_{0}\in \mathcal{N}$, such that
\begin{equation}
\begin{split}
\sum_{i=1}^{k_{0}}\tilde{\alpha}^{1}_{i}p_{H}^{inv}\tau &=\sum_{i=1}^{k_{0}}p_{H}^{inv}\tau \cdot \bm{1} \{i\in \mathcal{M}_{i_{0}}\cup\left(\mathcal{N}_{\tilde{\bm{\alpha}}}^{R,i_{0}}\setminus\{\overline{i}_{0}\}\right)\}\\
&>\sum_{l=1}^{k_{0}}E_{H,l}.
\label{feasibility3}
\end{split}
\end{equation}
If $1\leq k_{0}\leq \overline{i}_{0}-1$, since $\overline{i}_{0}=\min\{i|i\in\mathcal{N}_{\tilde{\bm{\alpha}}}^{R,i_{0}}\}$, the left hand side (LHS) of (\ref{feasibility3})
equals the LHS of (\ref{feasibility2}) when $k=k_{0}$, which violates the assumption that $\hat{\bm{\alpha}}$ is feasible. If $\overline{i}_{0}\leq k_{0}\leq N$, the LHS of (\ref{feasibility3}) equals the LHS of (\ref{feasibility1}) when $k=k_{0}$, which contradicts the feasibility of $\tilde{\bm{\alpha}}$. Hence, $\tilde{\bm{\alpha}}^{1}$ is feasible.

\subsection{Proof for Lemma \ref{lmaSecIVA2}}
We conduct the proof with mathematical induction. Denote $\epsilon^{+}\geq \epsilon^{-}$, $\bm{s}_{i}^{+}=\langle \epsilon^{+},\gamma_{G,i},\gamma_{H,i} \rangle$ and $\bm{s}_{i}^{-}=\langle \epsilon^{-},\gamma_{G,i},\gamma_{H,i} \rangle$. For $i=N$,
\begin{equation}
\begin{split}
\hat{u}_{N}^{*}\left(\epsilon^{+}\right)&=\sum_{\gamma_{G,N},\gamma_{H,N}}u_{N}^{*}
\left(\bm{s}^{+}_{N}\right)\\ &=\sum_{\gamma_{G,N},\gamma_{H,N}}\min_{\alpha^{+}_{N}\in\mathcal{A}_{\bm{s}^{+}_{N}}}c\left(\bm{s}^{+}_{N},\alpha_{N}^{+}\right),
\end{split}
\end{equation}
\begin{equation}
\begin{split}
\hat{u}_{N}^{*}\left(\epsilon^{-}\right)&=\sum_{\gamma_{G,N},\gamma_{H,N}}u_{N}^{*}
\left(\bm{s}^{-}_{N}\right) \\&=\sum_{\gamma_{G,N},\gamma_{H,N}}\min_{\alpha^{-}_{N}\in\mathcal{A}_{\bm{s}^{-}_{N}}}c\left(\bm{s}^{-}_{N},\alpha_{N}^{-}\right).
\end{split}
\end{equation}
It can be easily checked that $\forall\gamma_{G,N},\gamma_{H,N}$,
\begin{equation}
\min_{\alpha^{+}_{N}\in\mathcal{A}_{\bm{s}^{+}_{N}}}c\left(\bm{s}^{+}_{N},\alpha_{N}^{+}\right)\leq\min_{\alpha^{-}_{N}\in\mathcal{A}_{\bm{s}^{-}_{N}}}c\left(\bm{s}^{-}_{N},\alpha_{N}^{-}\right).
\end{equation}
Thus, $\hat{u}_{N}^{*}\left(\epsilon^{+}\right)\leq \hat{u}_{N}^{*}\left(\epsilon^{-}\right)$.
Suppose for $1<n< N$, $\hat{u}_{n}^{*}\left(\epsilon^{+}\right)\leq \hat{u}_{n}^{*}\left(\epsilon^{-}\right)$,
\begin{equation}
\begin{split}
&u_{n-1}^{*}\left(\bm{s}^{+}_{n-1}\right)=\\
&c\left(\bm{s}_{n-1}^{+},\alpha_{n-1}^{+*}\right)+K^{-2}\sum_{\epsilon^{+'}}p\left(\epsilon^{+'}|\bm{s}_{n-1}^{+},\alpha_{n-1}^{+*}\right)\hat{u}_{n}^{*}\left( \epsilon^{+'}\right),
\end{split}
\end{equation}
\begin{equation}
\begin{split}
&u_{n-1}^{*}\left(\bm{s}^{-}_{n-1}\right)=\\
&c\left(\bm{s}_{n-1}^{-},\alpha_{n-1}^{-*}\right)+K^{-2}\sum_{\epsilon^{-'}}p\left(\epsilon^{-'}|\bm{s}_{n-1}^{-},\alpha_{n-1}^{-*}\right)\hat{u}_{n}^{*}\left( \epsilon^{-'}\right),
\end{split}
\end{equation}
where $\alpha_{n-1}^{+*}$ and $\alpha_{n-1}^{-*}$ is the optimal action in the $\left( n-1 \right)$-th block given states $\bm{s}_{n-1}^{+}$ and $\bm{s}_{n-1}^{-}$, respectively. $\epsilon^{+'}$ ($\epsilon^{-'}$ ) is the possible energy state in the next block if $\alpha_{n-1}^{+*}$ ($\alpha_{n-1}^{-*}$) is taken.
We will show that under any combination of  $\langle \alpha_{n-1}^{+*} ,\alpha_{n-1}^{-*} \rangle$, i.e., $\langle 0, 0\rangle$, $\langle 0, 1\rangle$, $\langle 1, 0\rangle$,$\langle 1, 1\rangle$, $u_{n-1}^{*}\left(\bm{s}^{+}_{n-1}\right)\leq u_{n-1}^{*}\left(\bm{s}^{-}_{n-1}\right)  $.

We provide the following inequality to facilitate the proof, which can be verified by similar procedures as those in the proof of Lemma \ref{lmaSecIVA1},
\begin{equation}
\begin{split}
&\sum_{\epsilon^{+'}}p\left(\epsilon^{+'}|\bm{s}_{n-1}^{+},\alpha\right)\hat{u}_{n}^{*}\left( \epsilon^{+'}\right)\\
&\leq\sum_{\epsilon^{-'}}p\left(\epsilon^{-'}|\bm{s}_{n-1}^{-},\alpha\right)\hat{u}_{n}^{*}\left( \epsilon^{-'}\right).
\label{AssistIneq}
\end{split}
\end{equation}
For $\langle \alpha_{n-1}^{+*} ,\alpha_{n-1}^{-*} \rangle=\langle 0, 0\rangle$,
\begin{equation}
\begin{split}
&u^{*}_{n-1}\left(\bm{s}_{n-1}^{+}\right)\\&=c\left(\bm{s}_{n-1}^{+},0\right)+K^{-2}\sum_{\epsilon^{+'}}p\left(\epsilon^{+'}|\bm{s}_{n-1}^{+},0\right)\hat{u}_{n}^{*}\left( \epsilon^{+}\right)\\
&\overset{\left(b\right)}{\leq} c\left(\bm{s}_{n-1}^{-},0\right) +K^{-2}\sum_{\epsilon^{-'}}p\left(\epsilon^{-'}|\bm{s}_{n-1}^{-},0\right)\hat{u}_{n}^{*}\left( \epsilon^{-}\right)\\&=u^{*}_{n-1}\left(\bm{s}_{n-1}^{-}\right).
\end{split}
\end{equation}
For $\langle \alpha_{n-1}^{+*} ,\alpha_{n-1}^{-*} \rangle=\langle 0, 1\rangle$,
\begin{equation}
\begin{split}
&u^{*}_{n-1}\left(\bm{s}_{n-1}^{+}\right)\\&\overset{\left(c\right)}{\leq} c\left(\bm{s}_{n-1}^{+},1\right)+K^{-2}\sum_{\epsilon^{+'}}p\left(\epsilon^{+'}|\bm{s}_{n-1}^{+},1\right)\hat{u}_{n}^{*}\left( \epsilon^{+}\right)\\
&\overset{\left(b\right)}{\leq}  c\left(\bm{s}_{n-1}^{-},1\right)+K^{-2}\sum_{\epsilon^{-'}}p\left(\epsilon^{-'}|\bm{s}_{n-1}^{-},1\right)\hat{u}_{n}^{*}\left( \epsilon^{-}\right)\\&=u^{*}_{n-1}\left(\bm{s}_{n-1}^{-}\right).
\end{split}
\end{equation}
For $\langle \alpha_{n-1}^{+*} ,\alpha_{n-1}^{-*} \rangle=\langle 1, 0\rangle$,
\begin{equation}
\begin{split}
&u^{*}_{n-1}\left(\bm{s}_{n-1}^{+}\right)=K^{-2}\sum_{\epsilon^{+'}}p\left(\epsilon^{+'}|\bm{s}_{n-1}^{+},1\right)\hat{u}_{n}^{*}\left( \epsilon^{+}\right)\\
&\overset{\left(c\right)}{\leq} c\left(\bm{s}_{n-1}^{+},0\right)+  K^{-2}\sum_{\epsilon^{+'}}p\left(\epsilon^{+'}|\bm{s}_{n-1}^{+},0\right)\hat{u}_{n}^{*}\left( \epsilon^{+}\right)\\
&\overset{\left(b\right)}{\leq} c\left(\bm{s}_{n-1}^{-},0\right)+  K^{-2}\sum_{\epsilon^{-'}}p\left(\epsilon^{-'}|\bm{s}_{n-1}^{-},0\right)\hat{u}_{n}^{*}\left( \epsilon^{-}\right)\\&=u^{*}_{n-1}\left(\bm{s}_{n-1}^{-}\right).
\end{split}
\end{equation}
For $\langle \alpha_{n-1}^{+*} ,\alpha_{n-1}^{-*} \rangle=\langle 1, 1\rangle$,
\begin{equation}
\begin{split}
&u^{*}_{n-1}\left(\bm{s}_{n-1}^{+}\right)=K^{-2}\sum_{\epsilon^{+'}}p\left(\epsilon^{+'}|\bm{s}_{n-1}^{+},1\right)\hat{u}_{n}^{*}\left( \epsilon^{+}\right)\\
&\overset{\left(b\right)}{\leq} K^{-2}\sum_{\epsilon^{-'}}p\left(\epsilon^{-'}|\bm{s}_{n-1}^{-},1\right)\hat{u}_{n}^{*}\left( \epsilon^{-}\right)\\&=u^{*}_{n-1}\left(\bm{s}_{n-1}^{-}\right).
\end{split}
\end{equation}
In the above equations, $\left(b\right)$ holds because of (\ref{AssistIneq}), and $\left(c\right)$ holds due to (\ref{Bellman}). As a result, $\forall \gamma_{G,n-1},\gamma_{H,n-1}$, $u_{n-1}^{*}\left(\bm{s}^{+}_{n-1}\right)\leq u_{n-1}^{*}\left(\bm{s}^{-}_{n-1}\right) $. By summing up both sides of the inequality over all $\gamma_{G,n-1},\gamma_{H,n-1}$, we have $\hat{u}_{n-1}^{*}\left(\epsilon^{+}\right)\leq \hat{u}_{n-1}^{*}\left(\epsilon^{-}\right)$. Consequently, $1\leq n \leq N$, $\hat{u}_{n}^{*}\left(\epsilon^{+}\right)\leq \hat{u}_{n}^{*}\left(\epsilon^{-}\right)$.
\end{appendix}


\end{document}